\documentclass[12pt,a4paper]{article}
\usepackage{fullpage}
\usepackage{amssymb,amsmath,stmaryrd}
\usepackage[mathscr]{eucal}
\usepackage{epsfig}
\usepackage{pstricks,pst-node,pst-text,pst-3d}

\newtheorem{definition}{Definition}
\newtheorem{theorem}{Theorem}
\newtheorem{proposition}{Proposition}

\newtheorem{lemma}{Lemma}

\def \bel {\mathrm{bel}}
\def \pl {\mathrm{pl}}
\def \N {\mathrm{N}}

\newcounter{remark}
\newenvironment{remark}
{\begin{quote}\textsc{Remark} \stepcounter{remark} \arabic{remark}:}
{\end{quote}}
\newcounter{example}
\newenvironment{example}
{\begin{quote}\textsc{Example} \stepcounter{example} \arabic{example}:}
{\end{quote}}
\newenvironment{proof}{\medskip\noindent \bf Proof: \rm}{\hspace*{\fill}
$\blacksquare$ \newline \medskip}  

\begin{document}

\title{Belief functions on lattices}

\author{Michel GRABISCH\\
Universit\'e Paris I -- Panth\'eon-Sorbonne\\
\normalsize email \texttt{Michel.Grabisch@lip6.fr}
}

\date{}

\maketitle

\begin{abstract}
We extend the notion of belief function to the case where the underlying
structure is no more the Boolean lattice of subsets of some universal set, but
any lattice, which we will endow with a minimal set of properties according to
our needs. We show that all classical constructions and definitions (e.g., mass
allocation, commonality function, plausibility functions, necessity measures
with nested focal elements, possibility distributions, Dempster rule of
combination, decomposition w.r.t. simple support functions, etc.) remain valid
in this general setting. Moreover, our proof of decomposition of belief
functions into simple support functions is much simpler and general than the
original one by Shafer.
\end{abstract}

\textbf{Keywords: } belief function, lattice, plausibility, possibility,
necessity 

\section{Introduction}
The theory of evidence, as established by Shafer \cite{sha76} after the work of
Dempster \cite{dem67}, and brought into a practically usable form by the works
of Smets in particular \cite{sme90,sme94}, has become a popular tool in
artificial intelligence for the representation of knowledge and making decision.
In particular, many applications in classification have been done
\cite{den95,den00a}. The main advantage over more traditional models based on
probability is that the model of Shafer allows for a proper representation of
ignorance.

On a mathematical point of view, belief functions, which are at the core of the
theory of evidence, possess remarkable properties, in particular their links
with the M\"obius transform \cite{rot64} and the co-M\"obius transform
\cite{gra98c,grla00a}, called \emph{commonality} by Shafer. Remarking that
belief functions are non negative isotone functions defined on the Boolean
lattice of subsets, one may ask if all these properties remain valid when more
general lattices are considered. The aim of this paper is precisely to
investigate this question, and we will show that amazingly they all remain
valid. A first investigation of this question was done by Barth\'elemy
\cite{bar00}, and our work will complete his results. We are not aware of other
similar works, except the one of Kramosil \cite{kra01}, where belief functions
are defined on Boolean lattices but take value in a partially ordered set, and
the notion of bi-belief proposed by Grabisch and Labreuche \cite{grla03}, where
the underlying lattice is $3^n$.

On an application point of view, one may ask about the usefulness of such a
generalization, apart from its mathematical beauty. A general answer to this is
that the objects we manipulate (events, logical propositions, etc.) may not form
a Boolean lattice, i.e., distributive and complemented. Thus, a study on a
weaker yet rich structure has its interest. Let us give some examples.
\begin{itemize}
\item Case where the universal set $\Omega$ is the set of possible outcomes,
  states of nature, etc. In the classical case, all subsets of $\Omega$ (called
  events) are considered, but it may happen that some events are not observable
  or realizable, meaningful, etc. Then, the structure of the events is no more
  the Boolean lattice $2^\Omega$. 
\item Case where the universal set $\Omega$ is the set of propositional
  variables, either true or false. As argued by Barth\'elemy \cite{bar00}, in
  non-classical logics, the set of propositions need not be $2^\Omega$, and as
  we will see later, probability theory applies as far as the lattice induced by
  propositional calculus is distributive, and this covers intuitionistic logic
  and paraconsistent logic. If distributivity does not hold, then belief
  functions appear as a natural candidate, since as it will be shown, belief
  functions can live on any lattice.
\item Case where the universal set is the set of players/agents in some
  cooperative game or multiagent situation. Subsets of $\Omega$ are called
  coalitions, and most of the time, it happens that some coalitions are
  infeasible, .i.e., they cannot form, due to some inherent impossibility
  depending on the context. For example, in voting situations, clearly all
  coalitions of political parties cannot form. The same holds for agents or
  players in general where some incompatibilities exist between them.   
\item Knowledge extraction and modeling: objects under study are often
  structured as lattices. For example, the popular Formal Concept Analysis of
  Ganter and Wille \cite{gawi99} build lattices of  concepts, from a matrix of
  objects described by qualitative attributes. 
\item Finally, in some cases, objects of interest are not subsets of some
  universal set.
  This is the case for example when one is interested into the collection of
  partitions of some set (again, this happens in game theory under the name
  ``game in partition function form'' \cite{thlu63}, and also in knowledge
  extraction where the fundamental problem is to partition attributes), or when
  objects of interest are ``bi-coalitions'' like for bi-belief functions. A
  bi-coalition is a pair of subsets with empty intersection, and it may
  represent the set of criteria which are satisfied and the one which are not
  satisfied.  
\end{itemize}

The paper is organized as follows. Section \ref{sec:back} recalls necessary
material on lattices and classical belief functions. Section \ref{sec:bella}
gives the main results on belief defined over lattices, while the last one
examine the case of necessity measures. 

Throughout the paper, we will deal with finite lattices.

\section{Background}
\label{sec:back}
\subsection{Lattices}
\label{sec:latti}
We begin by recalling necessary material on lattices (a good introduction on
lattices can be found in \cite{dapr90} and \cite{mon03}), in a finite setting. A
\emph{poset} is a set $P$ endowed with a partial order $\leq$ (reflexive,
antisymmetric, transitive). A \emph{lattice} $L$ is a poset such that for any
$x,y\in L$ their least upper bound $x\vee y$ and greatest lower bound $x\wedge
y$ always exist. For finite lattices, the greatest element of $L$ (denoted
$\top$) and least element $\bot$ always exist. $x$ \emph{covers} $y$ (denoted
$x\succ y$) if $x> y$ and there is no $z$ such that $x>z>y$. Let $P$ be a poset,
$Q\subseteq P$ is a \emph{downset} if for any $y\in P$ such that $y\leq x$,
$x\in Q$, then $y\in Q$. The set of all downsets of $P$ is denoted by
$\mathcal{O}(P)$.

A \emph{linear lattice}, or \emph{chain}, is such that $\leq$ is a total order.
A chain $C$ in $L$ is \emph{maximal} if no element $x\in L\setminus C$ can be
added so that $C\cup\{x\}$ is still a chain.

Lattices can be represented by their \emph{Hasse diagram}, where nodes are
elements of the lattice, and there is an edge between $x$ and $y$, with $x$
above $y$, if and only if $x\succ y$. Fig. \ref{fig:lat1} shows three
lattices. The middle and right ones are two different diagrams of the lattice
of subsets of $\{1,2,3\}$ ordered by inclusion.
\begin{figure}[htb]
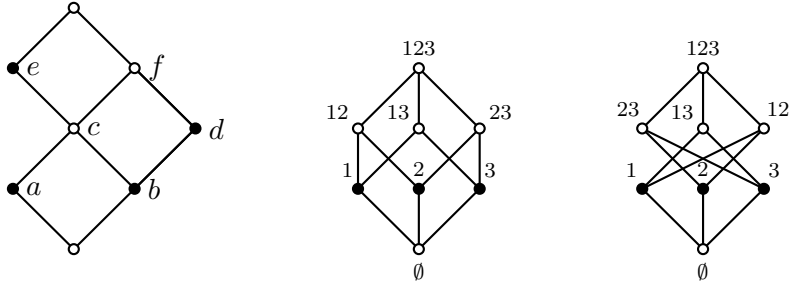

\begin{center}
\psset{unit=0.8cm}
\pspicture(0,0)(3,4)
\pspolygon(1,0)(3,2)(2,3)(0,1)
\pspolygon(2,1)(3,2)(1,4)(0,3)
\pscircle[fillstyle=solid](1,0){0.1}
\pscircle[fillstyle=solid,fillcolor=black](0,1){0.1}
\pscircle[fillstyle=solid](1,2){0.1}
\pscircle[fillstyle=solid,fillcolor=black](2,1){0.1}
\pscircle[fillstyle=solid,fillcolor=black](3,2){0.1}
\pscircle[fillstyle=solid,fillcolor=black](0,3){0.1}
\pscircle[fillstyle=solid](2,3){0.1}
\pscircle[fillstyle=solid](1,4){0.1}
\uput[0](1,2){\small $c$}
\uput[0](2,1){\small $b$}
\uput[0](0,1){\small $a$}
\uput[0](3,2){\small $d$}
\uput[0](0,3){\small $e$}
\uput[0](2,3){\small $f$}
\endpspicture
\hspace*{2cm}
\pspicture(0,0)(2,4)
\pspolygon(1,0)(0,1)(1,2)(2,1)
\pspolygon(1,1)(0,2)(1,3)(2,2)
\psline(0,1)(0,2)
\psline(1,0)(1,1)
\psline(2,1)(2,2)
\psline(1,2)(1,3)
\pscircle[fillstyle=solid](1,0){0.1}
\pscircle[fillstyle=solid,fillcolor=black](0,1){0.1}
\pscircle[fillstyle=solid,fillcolor=black](1,1){0.1}
\pscircle[fillstyle=solid,fillcolor=black](2,1){0.1}
\pscircle[fillstyle=solid](0,2){0.1}
\pscircle[fillstyle=solid](1,2){0.1}
\pscircle[fillstyle=solid](2,2){0.1}
\pscircle[fillstyle=solid](1,3){0.1}
\uput[-90](1,0){\scriptsize  $\emptyset$}
\uput[120](0,1){\scriptsize  $1$}
\uput[90](1,1){\scriptsize  $2$}
\uput[60](2,1){\scriptsize  $3$}
\uput[135](0,2){\scriptsize  $12$}
\uput[135](1,2){\scriptsize  $13$}
\uput[45](2,2){\scriptsize  $23$}
\uput[90](1,3){\scriptsize  $123$}
\endpspicture
\hspace*{2cm}
\pspicture(0,0)(2,4)
\psline(1,0)(0,1)
\psline(1,0)(1,1)
\psline(1,0)(2,1)
\psline(1,3)(0,2)
\psline(1,3)(1,2)
\psline(1,3)(2,2)
\psline(0,1)(1,2)
\psline(0,1)(2,2)
\psline(1,1)(0,2)
\psline(1,1)(2,2)
\psline(2,1)(0,2)
\psline(2,1)(1,2)
\pscircle[fillstyle=solid](1,0){0.1}
\pscircle[fillstyle=solid,fillcolor=black](0,1){0.1}
\pscircle[fillstyle=solid,fillcolor=black](1,1){0.1}
\pscircle[fillstyle=solid,fillcolor=black](2,1){0.1}
\pscircle[fillstyle=solid](0,2){0.1}
\pscircle[fillstyle=solid](1,2){0.1}
\pscircle[fillstyle=solid](2,2){0.1}
\pscircle[fillstyle=solid](1,3){0.1}
\uput[-90](1,0){\scriptsize  $\emptyset$}
\uput[120](0,1){\scriptsize  $1$}
\uput[90](1,1){\scriptsize  $2$}
\uput[60](2,1){\scriptsize  $3$}
\uput[60](2,2){\scriptsize  $12$}
\uput[135](1,2){\scriptsize  $13$}
\uput[120](0,2){\scriptsize  $23$}
\uput[90](1,3){\scriptsize  $123$}
\endpspicture
\end{center}
\caption{Examples of lattices}
\label{fig:lat1}
\end{figure}

Let $P,Q$ be two posets, and consider $f:P\rightarrow Q$. $f$ is \emph{isotone}
(resp. \emph{antitone}) if $x\leq y$ implies $f(x)\leq f(y)$ (resp. $f(x)\geq
f(y)$). $P$ and $Q$ are \emph{isomorphic} (resp. \emph{anti-isomorphic}),
denoted by $P\cong Q$ (resp. $P\cong Q^\partial$), if it exists a bijection $f$
from $P$ to $Q$ such that $x\leq y\Leftrightarrow f(x)\leq f(y)$ (resp.
$f(x)\geq f(y)$). Isomorphic posets have same Hasse diagrams, up to the
labelling of elements.

For any poset $(P,\leq$), one can
consider its \emph{dual} by inverting the order relation, which is denoted by
$(P,\leq^\partial)$ (or simply $P^\partial$ if the order relation is not
mentionned), i.e., $x\leq y$ if and only if $y\leq^\partial x$. \emph{Autodual}
posets are such that $P\cong P^\partial$ (i.e., they have the
same Hasse diagram). The lattices of Fig. \ref{fig:lat1} are all
autodual, and Fig. \ref{fig:lat2} shows their dual.
\begin{figure}[htb]
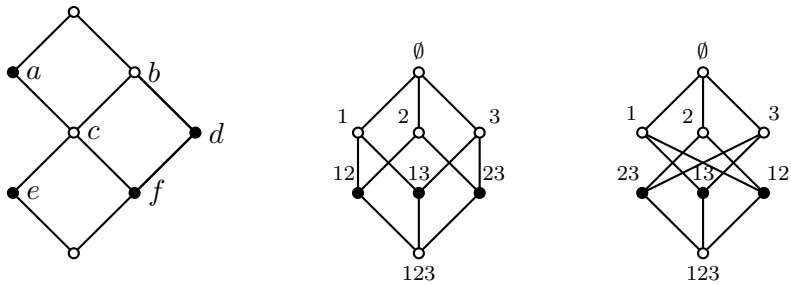

\begin{center}
\psset{unit=0.8cm}
\pspicture(0,0)(3,4)
\pspolygon(1,0)(3,2)(2,3)(0,1)
\pspolygon(2,1)(3,2)(1,4)(0,3)
\pscircle[fillstyle=solid](1,0){0.1}
\pscircle[fillstyle=solid,fillcolor=black](0,1){0.1}
\pscircle[fillstyle=solid](1,2){0.1}
\pscircle[fillstyle=solid,fillcolor=black](2,1){0.1}
\pscircle[fillstyle=solid,fillcolor=black](3,2){0.1}
\pscircle[fillstyle=solid,fillcolor=black](0,3){0.1}
\pscircle[fillstyle=solid](2,3){0.1}
\pscircle[fillstyle=solid](1,4){0.1}
\uput[0](1,2){\small $c$}
\uput[0](2,1){\small $f$}
\uput[0](0,1){\small $e$}
\uput[0](3,2){\small $d$}
\uput[0](0,3){\small $a$}
\uput[0](2,3){\small $b$}
\endpspicture
\hspace*{2cm}
\pspicture(0,0)(2,4)
\pspolygon(1,0)(0,1)(1,2)(2,1)
\pspolygon(1,1)(0,2)(1,3)(2,2)
\psline(0,1)(0,2)
\psline(1,0)(1,1)
\psline(2,1)(2,2)
\psline(1,2)(1,3)
\pscircle[fillstyle=solid](1,0){0.1}
\pscircle[fillstyle=solid,fillcolor=black](0,1){0.1}
\pscircle[fillstyle=solid,fillcolor=black](1,1){0.1}
\pscircle[fillstyle=solid,fillcolor=black](2,1){0.1}
\pscircle[fillstyle=solid](0,2){0.1}
\pscircle[fillstyle=solid](1,2){0.1}
\pscircle[fillstyle=solid](2,2){0.1}
\pscircle[fillstyle=solid](1,3){0.1}
\uput[-90](1,0){\scriptsize  $123$}
\uput[120](0,1){\scriptsize  $12$}
\uput[90](1,1){\scriptsize  $13$}
\uput[60](2,1){\scriptsize  $23$}
\uput[135](0,2){\scriptsize  $1$}
\uput[135](1,2){\scriptsize  $2$}
\uput[45](2,2){\scriptsize  $3$}
\uput[90](1,3){\scriptsize  $\emptyset$}
\endpspicture
\hspace*{2cm}
\pspicture(0,0)(2,4)
\psline(1,0)(0,1)
\psline(1,0)(1,1)
\psline(1,0)(2,1)
\psline(1,3)(0,2)
\psline(1,3)(1,2)
\psline(1,3)(2,2)
\psline(0,1)(1,2)
\psline(0,1)(2,2)
\psline(1,1)(0,2)
\psline(1,1)(2,2)
\psline(2,1)(0,2)
\psline(2,1)(1,2)
\pscircle[fillstyle=solid](1,0){0.1}
\pscircle[fillstyle=solid,fillcolor=black](0,1){0.1}
\pscircle[fillstyle=solid,fillcolor=black](1,1){0.1}
\pscircle[fillstyle=solid,fillcolor=black](2,1){0.1}
\pscircle[fillstyle=solid](0,2){0.1}
\pscircle[fillstyle=solid](1,2){0.1}
\pscircle[fillstyle=solid](2,2){0.1}
\pscircle[fillstyle=solid](1,3){0.1}
\uput[-90](1,0){\scriptsize  $123$}
\uput[120](0,1){\scriptsize  $23$}
\uput[90](1,1){\scriptsize  $13$}
\uput[60](2,1){\scriptsize  $12$}
\uput[60](2,2){\scriptsize  $3$}
\uput[135](1,2){\scriptsize  $2$}
\uput[120](0,2){\scriptsize  $1$}
\uput[90](1,3){\scriptsize  $\emptyset$}
\endpspicture
\end{center}
\caption{Dual of the lattices of Fig. \ref{fig:lat1}}
\label{fig:lat2}
\end{figure}

A lattice $L$ is \emph{lower semimodular} (resp. \emph{upper semimodular}) if
for all $x,y\in L$, $x\vee y\succ x$ and $x\vee y\succ y$ imply $x\succ x\wedge
y$ and $y\succ x\wedge y$ (resp. $x\succ x\wedge y$ and $y\succ x\wedge y$
imply $x\vee y\succ x$ and $x\vee y\succ y$). A lattice being upper and lower
semimodular is called \emph{modular}. The lattice is \emph{distributive} if
$(x\vee y)\wedge z= (x\wedge z)\vee(y\wedge z)$ holds for all $x,y,z\in L$. 
\begin{figure}[htb]
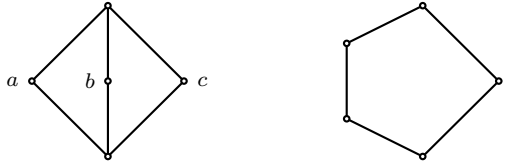

\begin{center}
\psset{unit=0.5cm}
\pspicture(0,0)(4,4)
\pspolygon(2,0)(0,2)(2,4)(4,2)
\psline(2,0)(2,4)
\pscircle[fillstyle=solid](2,0){0.1}
\pscircle[fillstyle=solid](0,2){0.1}
\pscircle[fillstyle=solid](2,4){0.1}
\pscircle[fillstyle=solid](4,2){0.1}
\pscircle[fillstyle=solid](2,2){0.1}
\uput[180](0,2){\scriptsize  $a$}
\uput[180](2,2){\scriptsize  $b$}
\uput[0](4,2){\scriptsize  $c$}
\endpspicture
\hspace*{2cm}
\pspicture(0,0)(4,4)
\pspolygon(2,0)(0,1)(0,3)(2,4)(4,2)
\pscircle[fillstyle=solid](2,0){0.1}
\pscircle[fillstyle=solid](0,1){0.1}
\pscircle[fillstyle=solid](0,3){0.1}
\pscircle[fillstyle=solid](2,4){0.1}
\pscircle[fillstyle=solid](4,2){0.1}
\endpspicture
\end{center}
\caption{The lattices $M_3$ (left) and $N_5$ (right)}
\label{fig:M3}
\end{figure}
$(L,\leq)$ is said to be \emph{lower (upper) locally distributive} if it is
lower (upper) semimodular, and it does not contain a sublattice isomorphic to
$M_3$.  These are weaker conditions than distributivity, and if $L$ is both
lower and upper locally distributive, then it is distributive.

An element $j\in L$ is \emph{join-irreducible} if $j=x\vee y$ implies either
$j=x$ or $j=y$, i.e., it cannot be expressed as a supremum of other
elements. Equivalently $j$ is join-irreducible if it covers only one element.
Join-irreducible elements covering $\bot$ are called \emph{atoms}, and the
lattice is \emph{atomistic} if all join-irreducible elements are atoms. The set
of all join-irreducible elements of $L$ is denoted $\mathcal{J}(L)$. On
Fig. \ref{fig:lat1} and \ref{fig:lat2}, they are figured as black nodes.

Similarly, \emph{meet-irreducible elements} cannot be written as an infimum of
other elements, and are such that they are covered by a single
element. We denote by $\mathcal{M}(L)$ the set of meet-irreducible elements of
$L$. \emph{Co-atoms} are meet-irreducible elements covered by $\top$.

For any $x\in L$, we say that $x$ \emph{has a complement in $L$} if there exists
$x'\in L$ such that $x\wedge x'=\bot$ and $x\vee x'=\top$. The complement is
unique if the lattice is distributive. $L$ is said to be \emph{complemented} if
any element has a complement. On Fig. \ref{fig:lat1} (left), no element has a
complement, except top and bottom, while the two others are complemented
lattices.

\emph{Boolean lattices} are distributive and complemented lattices, and in a
finite setting, they are of the type $2^N$ for some set $N$, i.e. they are
isomorphic to the lattice of subsets of some set, ordered by inclusion (see
Fig. \ref{fig:lat1} (middle,right)). Boolean lattices are atomistic, and atoms
correspond to singletons, while co-atoms are of the form $N\setminus\{i\}$ for
some $i\in N$.

\medskip

An important property is that in a lower locally distributive lattice, any
element $x$ can be written as an irredundant supremum of join-irreducible
elements in a unique way (this is called the \emph{minimal decomposition} of
$x$). We denote by $\eta^*(x)$ the set of join-irreducible elements in the
minimal decomposition of $x$, and we denote by $\eta(x)$ the \emph{normal
decomposition} of $x$, defined as the set of join-irreducible elements smaller
or equal to $x$, i.e., $\eta(x):=\{j\in \mathcal{J}(L)\mid j\leq x\}$.  Hence
$\eta^*(x)\subseteq \eta(x)$, and
\[
x=\bigvee_{j\in \eta^*(x)} j = \bigvee_{j\in \eta(x)} j.
\]
Put differently, the mapping $\eta$ is an isomorphism of
$L$ onto $\mathcal{O}(\mathcal{J}(L))$ (Birkhoff's theorem).

Likewise, any element in a upper locally distributive lattice can be written as
a unique irredundant infimum of meet-irreducible elements. The decomposition
are denoted by $\mu$ and $\mu^*$. Specifically,
$\mu(x):=\{m\in\mathcal{M}(L)\mid m\geq x\}$, and
$\displaystyle{x=\bigwedge_{m\in\mu(x)}m}$.

\medskip

The \emph{height function} $h$ on $L$ gives the length of a longest chain from
$\bot$ to any element in $L$. A lattice is \emph{ranked} if $x\succ y$ implies
$h(x)=h(y)+1$. A lattice is lower locally distributive if and only if it is
ranked and the length of any maximal chain is $|\mathcal{J}(L)|$.

\subsection{The M\"obius and co-M\"obius transforms}
\label{sec:mobi}
We follow the general definition of Rota \cite{rot64} (see also \cite[p.
102]{bir67}). Let $(L,\leq)$ be a poset which is locally finite
(i.e., any interval is finite) having a bottom element. For any function $f$ on
$(L,\leq)$, the \emph{M\"obius transform} of $f$ is the function
$m:L\longrightarrow \mathbb{R}$ solution of the equation:
\begin{equation}\label{eq:mob}
f(x) = \sum_{y\leq x}m(y).
\end{equation}
This equation has always a unique solution, and the expression of $m$ is
obtained through the M\"obius function $\mu:L^2\rightarrow \mathbb{R}$ by:
\begin{equation}\label{eq:invm}
m(x) = \sum_{y\leq x}\mu(y,x)f(y)
\end{equation}
where $\mu$ is defined inductively by
\begin{equation}\label{eq:mu}
\mu(x,y) = \left\{      \begin{array}{ll}
                        1, & \text{ if } x=y\\
                        -\sum_{x\leq t< y}\mu(x,t), & \text{ if } x< y\\
                        0, &  \text{ otherwise}.
                        \end{array}     \right.
\end{equation}
Note that $\mu$ depends solely on $L$.

The \emph{co-M\"obius transform} of $f$, denoted by $q$, is defined by
\cite{gra98c,grla00a}:
\begin{equation}\label{eq:com}
q(x):=\sum_{y\geq x} m(y), \quad x\in L.
\end{equation}

\subsection{Belief functions and related concepts}
\label{sec:beli}
We recall only necessary definitions. For details, the reader is referred to,
e.g., \cite{sme90,sme94}, or the monograph \cite{kra01a}.

Let $\Omega$ be a finite space. A function $m:2^\Omega\rightarrow [0,1]$ is said
to be a \emph{mass allocation function} (or simply a \emph{mass}) if
$m(\emptyset)=0$ and $\sum_{A\subseteq \Omega}m(A)=1$. A subset $A\subseteq N$
is said to be a \emph{focal element} if $m(A)>0$.

A \emph{belief function} on $\Omega$ is a function $\bel:2^\Omega\rightarrow
[0,1]$ generated by a mass allocation function as follows:
\begin{equation}\label{eq:bel}
\bel(A) := \sum_{B\subseteq A} m(B), \quad A\subseteq \Omega.
\end{equation} 
Note that $\bel(\emptyset)=0$ and $\bel(\Omega)=1$. One recognizes $m$ as being
the M\"obius transform of $\bel$ (apply Eq. (\ref{eq:mob}) to
$(L,\leq):=(2^\Omega,\subseteq)$). The inverse formula, obtained by using
(\ref{eq:invm}) and (\ref{eq:mu}), is:
\begin{equation}\label{eq:invb}
m(A) = \sum_{B\subseteq A} (-1)^{|A\setminus B|}\bel(B).
\end{equation}
Given a mass allocation $m$, the \emph{plausibility function} is defined by:
\begin{equation}\label{eq:plau}
\pl(A):=\sum_{B\mid A\cap B\neq\emptyset} m(B) = 1-\bel(A^c), \quad A\subseteq \Omega.
\end{equation}
Similarly, the \emph{commonality function} is defined by:
\begin{equation}\label{eq:comm}
q(A):=\sum_{B\supseteq A} m(B) , \quad A\subseteq \Omega.
\end{equation}
It is the co-M\"obius transform of $\bel$ (see (\ref{eq:com})). Remark that
$q(\emptyset)=1$.  

A \emph{capacity} on $\Omega$ is a set function $v:2^\Omega\rightarrow [0,1]$
such that $v(\emptyset)=0$, $v(\Omega)=1$, and $A\subseteq B$ implies
$v(A)\leq v(B)$ (\emph{monotonicity}). Plausibility and belief functions are
capacities. For any capacity $v$, its \emph{conjugate} is defined by
$\overline{v}(A):=1-v(A^c)$. Hence, plausibility functions are conjugate of
belief functions (and vice versa). A capacity is \emph{$k$-monotone}
($k\geq 2$) if for any family of $k$ subsets $A_1,\ldots,A_k$ of $\Omega$, it
holds:
\begin{equation}
v(\bigcup_{i\in K} A_i)\geq \sum_{I\subseteq
  K,I\neq\emptyset}(-1)^{|I|+1}v(\bigcap_{i\in I}A_i),
\end{equation}
with $K:=\{1,\ldots,k\}$. A capacity is \emph{totally monotone} if it is
$k$-monotone for every $k\geq 2$. 

Shafer \cite{sha76} has shown that a capacity is totally monotone if and only if
it is a belief function, hence there exists some mass allocation generating it.

Given two mass allocations $m_1,m_2$, the \emph{Dempster's rule of combination}
computes a combination of both masses into a single one:
\begin{equation}
m(A) =:(m_1\oplus m_2)(A):= \sum_{B_1\cap B_2=A}m_1(B_1)m_2(B_2), \quad
\forall A\subseteq \Omega, A\neq\emptyset,
\end{equation}
and $m(\emptyset):=0$.  Note that $m$ is no more a mass allocation in general,
unless some normalization is carried out. It is well known that the Dempster
rule of combination can be computed through the commonality functions much more
easily. Specifically, calling $q,q_1,q_2$ the commonality functions associated
to $m,m_1,m_2$, one has:
\begin{equation}\label{eq:drq}
q(A) = q_1(A)q_2(A), \quad \forall A\subseteq \Omega.
\end{equation}

A \emph{simple support function focused on $A$} is a particular belief function
$\bel_A$ whose mass allocation is:
\begin{equation}\label{eq:ssf}
m_A(B):=\begin{cases}
  1 - w_A, & \text{if } B=A\\
  w_A, & \text{if } B=\Omega\\
  0, & \text{otherwise}.
  \end{cases}
\end{equation}
with $0<w_A<1$. Smets \cite{sme95}, using results of Shafer, has shown that
any belief function such that $m(\Omega)\neq 0$ can be decomposed using only
simple support functions as follows:
\begin{equation}\label{eq:decom}
\bel=\bigoplus_{A\subseteq \Omega}\bel_A
\end{equation}
with
\begin{equation}\label{eq:decom1}
w_A=\prod_{B\supseteq A}q(B)^{(-1)^{|B\setminus A|+1}}, \quad\forall A\subseteq
\Omega. 
\end{equation}
In the above decomposition, coefficients $w_A$ may be greater than 1. If this
happens, the corresponding $\bel_A$ is no more a belief function.

\medskip

A \emph{necessity function} or \emph{necessity measure} is a belief function
whose focal elements form a chain in $(2^\Omega,\subseteq)$, i.e.,
$A_1\subseteq A_2\subseteq\cdots\subseteq A_n$ (Dubois and Prade,
\cite{dupr85b}). The characteristic property of necessity functions is that for
any subsets $A,B$, $\N(A\cap B)=\min(\N(A),\N(B))$, where $\N$ denotes a
necessity function.

Conjugates of necessity functions are called \emph{possibility functions},
denoted by $\Pi$, and are particular plausibility functions. It is easy to see
that their characteristic property is that for any subsets $A,B$, $\Pi(A\cup
B)=\max(\Pi(A),\Pi(B))$. This characteristic property implies that $\Pi$ is
entirely determined by its value on singletons, i.e., $\Pi(A)=\max_{\omega\in
A}\Pi(\{\omega\})$ for any $A\subseteq \Omega$. For this reason,
$\pi(\omega):=\Pi(\{\omega\})$ is called the \emph{possibility distribution}
associated to $\Pi$. Note that necessarily there exists $\omega_0\in \Omega$
such that $\pi(\omega_0)=1$. Although this is generally not considered, one may
define as well a \emph{necessity distribution}
$\nu(\omega):=\N(\Omega\setminus\omega)$, with the property that
$\N(A)=\min_{\omega\in A^c}\nu(\omega)$. 

Let $\pi$ be a possibility distribution on
$\Omega:=\{\omega_1,\ldots,\omega_n\}$, and assume that for some permutation
$\sigma$ on $\{1.\ldots,n\}$, it holds $\pi(\omega_{\sigma(1)})
\leq\pi(\omega_{\sigma(2)})\leq \cdots\leq \pi(\omega_{\sigma(n)})=1$. Then it
can be shown that the focal elements of the mass allocation associated to $\Pi$
are of the form
$A_{\sigma(i)}:=\{\omega_{\sigma(i)},\ldots,\omega_{\sigma(n)}\}$, $i=1,\ldots,
n$, and $m(A_{\sigma(i)})=\pi(\omega_{\sigma(i)}) - \pi(\omega_{\sigma(i-1)})$,
with the convention $\pi(\omega_{\sigma(0)})=0$. 

\section{Belief functions and capacities on lattices}
\label{sec:bella}
Let $(L,\leq)$ be a finite lattice. A \emph{capacity} on $L$ is a function
$v:L\rightarrow [0,1]$ such that $v(\bot)=0$, $v(\top)=1$, and $x\leq y$
implies $v(x)\leq v(y)$ (isotonicity). 

To define the conjugate of a capacity, a natural way would be to write
$\overline{v}(x):=1-v(x')$, where $x'$ is the complement of $x$. But this
would impose that $L$ is complemented, which is very restrictive. For example,
the lattice $3^n$ underlying bi-belief functions is not complemented. Moreover,
if distributivity is imposed in addition, then only Boolean lattices are
allowed, and we are back to the classical definition. We adopt a more general
definition.
\begin{definition}
A lattice $L$ is of \emph{De Morgan type} if it exists a bijective mapping
$n:L\rightarrow L$ such that for any $x,y\in L$ it holds $n(x\vee y)=n(x)\wedge
n(y)$, and $n(\top)=\bot$. We call such a mapping a \emph{$\vee$-negation}.
\end{definition} 
The following is immediate.
\begin{lemma}\label{lem:nega}
Let $L$ be a De Morgan lattice, with $n$ a $\vee$-negation. Then:
\begin{itemize}
\item [(i)] $n(\bot)=\top$.
\item [(ii)] $n^{-1}(x\wedge y)=n^{-1}(x)\vee
n^{-1}(y)$, for all $x,y\in L$ ($n^{-1}$ is called a \emph{$\wedge$-negation}).
\item [(iii)] If $j$ is join-irreducible, then
  $n(j)$ is meet-irreducible, and if $m$ is meet-irreducible, then $n^{-1}(m)$ is
join-irreducible. 
\end{itemize}
\end{lemma}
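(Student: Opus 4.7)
The plan is to exploit the fact that a $\vee$-negation is automatically an anti-isomorphism: from $n(x\vee y)=n(x)\wedge n(y)$ together with bijectivity, one deduces that $n$ reverses the order, and everything in the lemma then follows from transporting irreducibility across this anti-isomorphism.

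For (i), I would feed the neutral element into the defining identity: for any $x\in L$, $x=x\vee\bot$, so $n(x)=n(x\vee\bot)=n(x)\wedge n(\bot)$, i.e.\ $n(x)\leq n(\bot)$ for all $x$. Surjectivity of $n$ then forces $n(\bot)=\top$. For (ii), I would simply substitute: setting $u=n^{-1}(x)$ and $v=n^{-1}(y)$, the defining identity gives $x\wedge y=n(u)\wedge n(v)=n(u\vee v)$, and applying $n^{-1}$ yields $n^{-1}(x\wedge y)=u\vee v=n^{-1}(x)\vee n^{-1}(y)$.

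For (iii), I would use the standard equivalence ``$j$ is join-irreducible iff $j=a\vee b$ implies $j\in\{a,b\}$'' (with the dual statement for meet-irreducibility). Suppose $j$ is join-irreducible and $n(j)=x\wedge y$; applying (ii), $j=n^{-1}(x\wedge y)=n^{-1}(x)\vee n^{-1}(y)$, hence by join-irreducibility of $j$ either $j=n^{-1}(x)$ or $j=n^{-1}(y)$, i.e.\ $n(j)\in\{x,y\}$. So $n(j)$ is meet-irreducible. The converse direction, starting from a meet-irreducible $m$ and writing $n^{-1}(m)=a\vee b$, is completely symmetric: apply $n$, use the defining identity, and conclude from meet-irreducibility of $m$.

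No step is really an obstacle here; the only mild subtlety is that the $\vee$-negation is given only on one side ($n(x\vee y)=n(x)\wedge n(y)$ and $n(\top)=\bot$), so item (i) and the ``$\wedge$-negation'' identity (ii) must be derived before one can freely switch between the two operations in the irreducibility argument of (iii).
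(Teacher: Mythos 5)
Your proposal is correct and follows essentially the same route as the paper: (i) by plugging $\bot$ into the defining identity and using bijectivity, (ii) by direct substitution, and (iii) by transporting irreducibility across the anti-isomorphism. If anything, your version of (iii) is slightly more careful than the paper's, since you start from an arbitrary meet-decomposition of $n(j)$ and pull it back via $n^{-1}$ (and you also treat the converse half explicitly, which the paper omits as symmetric).
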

\begin{proof}
(i) $n(x\vee \bot)=n(x) =
  n(x)\wedge n(\bot)$, for all $x\in L$, which implies
  $n(\bot)=\top$ because $n$ is a bijection.

(ii) Putting
$x':=n(x)$ and $y':=n(y)$, we have $n^{-1}(x'\wedge y')=n^{-1}(n(x\vee
y))=x\vee y= n^{-1}(x')\vee n^{-1}(y')$.

(iii) If $j$ is join-irreducible, $j=x\vee y$ implies that $j=x$ or
  $j=y$. Hence, $n(j)=n(x\vee y)=n(x)\wedge n(y)$ is either $n(x)$ or $n(y)$,
  which means that $n(j)$ is meet-irreducible.
\end{proof}

A complemented lattice with unique complement is of De Morgan type with
$n(x):=x'$. If $L$ is isomorphic to its dual $L^\partial$, i.e. it is autodual,
then it is of De Morgan type since it suffices to take for $n(x)$ the element
in the Hasse diagram of $L^\partial$ which takes the place of $x$ in the Hasse
diagram of $L$. In this case, we call $n$ a \emph{horizontal symmetry}. In
general, $n$ is not unique since there is no unique way to draw Hasse
diagrams. Taking lattices of Fig. \ref{fig:lat1} as examples, for the left one,
we would have $n(a)=e$, for the middle one $n(12)=1$, and for the right one
$n(12)=3$ (see Fig. \ref{fig:lat2}). Since middle and right lattices are the
same, this shows that several $n$ exist in general. Note that $n$ for the right
lattice is nothing else than the usual complement.

The following result shows that in fact the only De Morgan type lattices are
those which are autodual.  
\begin{proposition}\label{prop:gal}
A lattice $L$ is of De Morgan type if and only if it is autodual.
\end{proposition}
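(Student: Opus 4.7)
The plan is to prove the two directions separately, using the characterization $x \leq y \Leftrightarrow x \vee y = y$ as the bridge between the order structure and the join operation.

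For the ``only if'' direction, suppose $L$ is of De Morgan type with $\vee$-negation $n$. I would show that $n$ itself is an anti-isomorphism. The key step: if $x \leq y$ then $x \vee y = y$, so $n(y) = n(x \vee y) = n(x) \wedge n(y)$, which means $n(y) \leq n(x)$. For the converse implication, if $n(y) \leq n(x)$ then $n(x) \wedge n(y) = n(y)$, so $n(x \vee y) = n(y)$, and bijectivity of $n$ gives $x \vee y = y$, i.e., $x \leq y$. Hence $n$ is a bijection reversing the order, which is exactly the definition of $L \cong L^\partial$.

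For the ``if'' direction, let $\phi : L \to L$ be a bijection with $x \leq y \Leftrightarrow \phi(x) \geq \phi(y)$ (such a $\phi$ exists by autoduality). I would verify that $\phi$ satisfies the two axioms of a $\vee$-negation. The condition $\phi(\top) = \bot$ is immediate because $\top$ is the unique maximum of $L$ and $\phi$ reverses order, so $\phi(\top)$ must be the unique minimum. For $\phi(x \vee y) = \phi(x) \wedge \phi(y)$, one inequality comes from $x, y \leq x \vee y$ giving $\phi(x \vee y) \leq \phi(x), \phi(y)$, hence $\phi(x \vee y) \leq \phi(x) \wedge \phi(y)$. The other inequality uses $\phi^{-1}$, which is also order-reversing: setting $w := \phi^{-1}(\phi(x) \wedge \phi(y))$, we get $w \geq x$ and $w \geq y$ (since $\phi(w) \leq \phi(x), \phi(y)$), hence $w \geq x \vee y$, which upon applying $\phi$ yields $\phi(x) \wedge \phi(y) \leq \phi(x \vee y)$.

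There is no genuine obstacle in this proof; the only mild subtlety is recognizing that the bijectivity assumption in the De Morgan axioms is strong enough to force order-reversal just from preservation of joins (turned into meets), without having to separately assume anything about $\wedge$. Lemma~\ref{lem:nega}(ii) already hints at this symmetry, but here one does not even need it: the argument in the forward direction only uses the $\vee$-to-$\wedge$ identity and bijectivity. The two directions together show that ``De Morgan type'' is really just a reformulation of ``autodual,'' with the $\vee$-negation and the anti-isomorphism being literally the same map.
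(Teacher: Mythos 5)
Your proof is correct and follows essentially the same route as the paper: the ``only if'' direction (showing the $\vee$-negation is an anti-isomorphism via $x\leq y \Leftrightarrow x\vee y = y$ and bijectivity) is literally the paper's argument. For the ``if'' direction the paper merely points at the Hasse diagram of $L^\partial$, whereas you spell out the verification that an anti-isomorphism satisfies the two De Morgan axioms; this is a welcome extra detail but not a different approach.
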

\begin{proof}
We already know that if $L$ is autodual, then it is of De Morgan type.
Conversely, assuming it is of De Morgan type, it suffices to show that $n$ is an
anti-isomorphism. We already know that $n$ is a bijection. Taking $x\leq y$
implies that $x\vee y=y$, hence $n(x\vee y)=n(y)=n(x)\wedge n(y)$, which implies
$n(y)\leq n(x)$. Conversely, $n(y)\leq n(x)$ implies $n(y)\wedge
n(x)=n(y)=n(x\vee y)$, hence $x\vee y=y$ since $n$ is a bijection, so that
$x\leq y$. 
\end{proof}

In general, $n$ and $n^{-1}$ differ, that is, $n$ is not always
\emph{involutive}. Take for example the lattice $M_3$ of Fig. \ref{fig:M3}, and
$n$ defined by $n(\top)=\bot$, $n(\bot)=\top$, $n(a)=b$, $n(b)=c$ and
$n(c)=a$. Clearly, $n$ is a $\vee$-negation, but $n(n(a))=c\neq a$. The
$\vee$-negation is involutive whenever $n$ is a horizontal symmetry on the
Hasse diagram. If $n$ is involutive, it is simply called a \emph{negation}.

\begin{definition}
Let $L$ be an autodual lattice, and $n$ a $\vee$-negation on $L$. For any
capacity $v$, its \emph{$\vee$-conjugate} and \emph{$\wedge$-conjugate}
(w.r.t. $n$) are defined respectively by 
\begin{align*}
{}^\vee\overline{v}(x) &:=1-v(n(x))\\
{}^\wedge\overline{v}(x) &:=1-v(n^{-1}(x)),
\end{align*}
for any $x\in L$. If $n$ is a negation, then $\overline{v}(x):=1-v(n(x))$ is
the \emph{conjugate} of $v$.
\end{definition}
The following is immediate.
\begin{lemma}\label{lem:neg}
Let $L$ be an autodual lattice, and $n$ a $\vee$-negation on $L$. For any
capacity $v$, it holds
\begin{itemize}
\item [(i)] ${}^\vee\overline{v}$ and ${}^\wedge\overline{v}$ are capacities on
$L$.
\item [(ii)] ${}^\vee\overline{{}^\wedge\overline{v}}
={}^\wedge\overline{{}^\vee\overline{v}}=v$.  
\end{itemize}
\end{lemma}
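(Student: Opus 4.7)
The plan is to verify the three defining properties of a capacity for ${}^\vee\overline{v}$ and ${}^\wedge\overline{v}$ separately in part (i), and then settle part (ii) by direct unwinding of the definitions.

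For (i), the boundary conditions are immediate: ${}^\vee\overline{v}(\bot) = 1 - v(n(\bot)) = 1 - v(\top) = 0$ using Lemma \ref{lem:nega}(i), and ${}^\vee\overline{v}(\top) = 1 - v(n(\top)) = 1 - v(\bot) = 1$ by the definition of a $\vee$-negation. The only step with real content is isotonicity. Here I would invoke Proposition \ref{prop:gal}, whose proof in fact establishes that $n$ is an anti-isomorphism of $L$. Hence $x \leq y$ implies $n(x) \geq n(y)$, so that $v(n(x)) \geq v(n(y))$ by isotonicity of $v$, giving ${}^\vee\overline{v}(x) \leq {}^\vee\overline{v}(y)$. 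The argument for ${}^\wedge\overline{v}$ is identical once we observe that $n^{-1}$ is again an anti-isomorphism: this follows either by applying the argument of Proposition \ref{prop:gal} to the $\wedge$-negation identity of Lemma \ref{lem:nega}(ii), or by noting that the inverse of an anti-isomorphism is an anti-isomorphism. The needed boundary values $n^{-1}(\top) = \bot$ and $n^{-1}(\bot) = \top$ follow from applying $n^{-1}$ to $n(\bot) = \top$ and $n(\top) = \bot$.

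For (ii) the computation is mechanical:
\[
{}^\vee\overline{{}^\wedge\overline{v}}(x) \;=\; 1 - {}^\wedge\overline{v}(n(x)) \;=\; 1 - \bigl(1 - v(n^{-1}(n(x)))\bigr) \;=\; v(x),
\]
and an exactly symmetric computation with the roles of $n$ and $n^{-1}$ swapped gives ${}^\wedge\overline{{}^\vee\overline{v}}(x) = v(x)$. The only mild obstacle throughout is keeping track of the non-involutive character of $n$: one must be careful to pair $n$ with $n^{-1}$ in the right order, which is precisely why the two conjugates ${}^\vee\overline{v}$ and ${}^\wedge\overline{v}$ had to be introduced as distinct objects in the first place.
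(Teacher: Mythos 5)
Your proof is correct and follows essentially the same route as the paper's: boundary values from $n(\top)=\bot$ and $n(\bot)=\top$, isotonicity from the antitonicity of $n$ (which the paper, like you, gets from the fact that $n$ is an anti-isomorphism, established in Proposition \ref{prop:gal}), and part (ii) by direct unwinding of the definitions. Your treatment is merely a little more explicit than the paper's about $n^{-1}$ also being an anti-isomorphism and about the order in which $n$ and $n^{-1}$ are composed, which is a harmless (indeed welcome) elaboration rather than a different argument.
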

\begin{proof}
(i) ${}^\vee\overline{v}(\top)=1-v(n(\top))=1$, similarly for
$\bot$. Isotonicity of ${}^\vee\overline{v}$ follows from antitonicity of $n$
and isotonicity of $v$. 

(ii) ${}^\vee\overline{{}^\wedge\overline{v}}(x)=1-{}^\wedge\overline{v}(n(x))
= 1-(1-v(x))=v(x)$. 
\end{proof}

\medskip

The following definition of belief functions is in the spirit of the original
one by Shafer. We used it also for defining bi-belief functions \cite{grla03}.
\begin{definition}
  A function $\bel:L\rightarrow [0,1]$ is called a \emph{belief function} if
  $\bel(\top)=1$, $\bel(\bot)=0$, and its M\"obius transform is non negative.
\end{definition}
Referring to (\ref{eq:mob}), we recall that
\begin{equation}\label{eq:bell}
\bel(x)=\sum_{y\leq x}m(y),\quad \forall x\in L.
\end{equation} 
Note that $\bel(\top)=1$ is equivalent to $\sum_{x\in L}m(x)=1$, and
$\bel(\bot)=0$ is equivalent to $m(\bot)=0$. The inverse formula, giving $m$ in
terms of $\bel$, has to be computed from (\ref{eq:mu}), and depends only on the
structure of $L$. 

Remark that $\bel$ is an isotone function by nonnegativity of $m$, and hence a
capacity.

Thanks to the definition of conjugation, if $L$ is autodual and $n$ is a
$\vee$-negation, one can define \emph{plausibility functions} as the
$\vee$-conjugate of belief functions, which are again capacities.

\subsection{$k$-monotone functions}
\label{sec:kmon}
Barth\'elemy defines belief function as totally monotone functions. To detail
this point, we define $k$-monotone functions. For $k\geq 2$, a function
$f:L\rightarrow \mathbb{R}$ is said to be \emph{$k$-monotone} (called
\emph{weakly $k$-monotone} by Barth\'elemy) if it satisfies, for any family of
elements $x_1,\ldots,x_k\in L$:
\begin{equation}\label{eq:kmon}
f(\bigvee_{i\in K}x_i) \geq \sum_{I\subseteq K, I\neq\emptyset}(-1)^{|I|+1}
f(\bigwedge_{i\in I}x_i)
\end{equation}
where $K:=\{1,\ldots,k\}$. 
A function is said to be \emph{totally monotone} if it is $k$-monotone for all
$k\geq 2$. One can prove that in fact, if $|L|=n$, total monotonicity is
equivalent to $(n-2)$-monotonicity \cite{bar00}.

For $k\geq 2$, a function is said to be a \emph{$k$-valuation} if the inequality
(\ref{eq:kmon}) degenerates into an equality (called also \emph{Poincar\'e's
  inequality}). Similarly, a function is an \emph{infinite valuation} or
\emph{total valuation} if it is a $k$-valuation for all $k\geq 2$. It is well
known that monotone infinite valuations satisfying $f(\top)=1$ and $f(\bot)=0$
are probability measures.

The following lemma, cited in \cite{bar00}, summarizes well-known results from
lattice theory (see Birkhoff \cite{bir67}).
\begin{lemma}
Let $L$ be a lattice. Then
\begin{itemize}
\item [(i)] $L$ is modular if and only if it admits a strictly monotone
  2-valuation.
\item [(ii)] $L$ is distributive if and only if it is modular and every strictly
  monotone 2-valuation on $L$ is a 3-valuation. 
\item [(iii)] $L$ is distributive if and only if it admits a strictly monotone
  3-valuation.
\item [(iv)] $L$ is distributive if and only if it is modular and every strictly
  monotone 2-valuation on $L$ is an infinite valuation.
\end{itemize}
\end{lemma}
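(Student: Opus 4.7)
The plan rests on two classical structural results of Birkhoff that are already alluded to in the excerpt: (a) a lattice is modular iff the Jordan--Dedekind chain condition holds, in which case the height function $h$ is a strictly monotone $2$-valuation; and (b) the forbidden-sublattice characterizations---non-modular lattices embed $N_5$, and modular non-distributive lattices embed $M_3$ (see Fig.~\ref{fig:M3}). Every converse I establish below exploits one of these embeddings combined with strict monotonicity.

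For (i), the forward direction is immediate from (a). For the converse, assume $L$ is non-modular and pick an $N_5$-sublattice with elements $\bot' < a < b < \top'$ and $c$ incomparable to $a,b$, satisfying $a\vee c = b\vee c = \top'$ and $a\wedge c = b\wedge c = \bot'$ (these joins and meets coincide with those of $L$ since $N_5$ sits as a sublattice). Applying the $2$-valuation identity to $(a,c)$ and to $(b,c)$ yields $v(\top')+v(\bot') = v(a)+v(c) = v(b)+v(c)$, whence $v(a)=v(b)$, contradicting strict monotonicity. For (iii), specializing the $3$-valuation identity by setting $x_1=x_3$ recovers the $2$-valuation identity, so a strictly monotone $3$-valuation is a strictly monotone $2$-valuation; by (i), $L$ is modular. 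If $L$ were modular but non-distributive, it would embed $M_3$ with elements $\alpha < a,b,c < \beta$ satisfying $a\vee b = a\vee c = b\vee c = \beta$ and $a\wedge b = a\wedge c = b\wedge c = \alpha$. The three pairwise $2$-valuations force $v(a)=v(b)=v(c)$; substituting into the $3$-valuation applied to $(a,b,c)$ gives $v(a)=v(\alpha)$, contradicting strict monotonicity.

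For (ii), distributivity implies modularity, so it remains to show that in a distributive $L$ every $2$-valuation is a $k$-valuation for all $k\ge 2$. I would prove this by induction on $k$: apply the $2$-valuation identity to $\bigvee_{i<k}x_i$ and $x_k$, use distributivity to rewrite $\bigl(\bigvee_{i<k}x_i\bigr)\wedge x_k = \bigvee_{i<k}(x_i\wedge x_k)$, then apply the inductive hypothesis separately to the two $(k-1)$-fold joins on the right-hand side. Reassembling the two alternating sums produces Poincar\'e's identity at level $k$. This simultaneously settles the converse of (iii), since applying the construction to the height function of a distributive $L$ yields a strictly monotone $3$-valuation. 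The converse of (ii) reuses the $M_3$ argument above: in a modular $L$, the height function is a strictly monotone $2$-valuation, hence by hypothesis a $3$-valuation, which again forces $v(a)=v(\alpha)$. Finally, (iv) follows from (ii) together with the elementary observation that every infinite valuation is in particular a $3$-valuation, while the forward direction is obtained by iterating the same induction to all $k$.

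The principal obstacle is the inclusion--exclusion induction in the forward direction of (ii): the alternating-sum bookkeeping that assembles Poincar\'e's identity at level $k$ from two applications at level $k-1$ is the one place where distributivity is genuinely used, and one has to verify that the cross-terms arising from $\bigvee_{i<k}(x_i\wedge x_k)$ match the missing terms in the Poincar\'e expansion at level $k$. Everything else reduces to applying the $2$- and $3$-valuation identities inside the forbidden sublattices $N_5$ and $M_3$, combined with strict monotonicity.
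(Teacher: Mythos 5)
The paper does not actually prove this lemma: it is stated as a summary of ``well-known results from lattice theory'' with a pointer to Birkhoff \cite{bir67} and Barth\'elemy \cite{bar00}, so there is no in-text argument to compare against. Your proof is, in substance, the standard textbook argument, and it is correct. The forbidden-sublattice computations are right: in $N_5$ the two applications of the $2$-valuation identity to $(a,c)$ and $(b,c)$ do force $v(a)=v(b)$ against strict monotonicity, and in $M_3$ the three pairwise identities force $v(a)=v(b)=v(c)$, after which the $3$-valuation identity on $(a,b,c)$ gives $v(a)=v(\alpha)$. The specialization $x_1=x_3$ showing that every $3$-valuation is a $2$-valuation checks out, and the induction for the forward direction of (ii) and (iv) is the right mechanism: writing $\bigl(\bigvee_{i<k}x_i\bigr)\wedge x_k=\bigvee_{i<k}(x_i\wedge x_k)$ and applying the level-$(k-1)$ identity to the elements $y_i:=x_i\wedge x_k$ produces exactly the terms of the level-$k$ Poincar\'e sum indexed by subsets containing $k$, with the correct signs, so the bookkeeping you flag as the main obstacle does close. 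Two small caveats. First, your opening assertion that ``a lattice is modular iff the Jordan--Dedekind chain condition holds'' is false as stated (upper semimodular lattices satisfy the chain condition without being modular); what you actually need and use is only that in a finite \emph{modular} lattice the height function is a strictly monotone $2$-valuation, which is true, so the error is confined to an aside. Second, the existence of the height function relies on finiteness, which the paper assumes globally, so you should invoke that assumption explicitly rather than leave it implicit.
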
  
Barth\'elemy showed in addition that any lattice admits a totally monotone
function. In view of this result, Barth\'elemy defines belief functions as totally
monotone function being monotone and satisfying $f(\top)=1$ and $f(\bot)=0$. In
summary, a belief function can be defined on any lattice, while probability
measures can live only on distributive lattices.

The following proposition shows the relation between both definitions. Before,
we state a result from \cite{bar00}.
\begin{lemma}\label{lem:2}
  For any lattice $L$ and any function $m:L\rightarrow [0,1]$ such that
  $m(\bot)=0$ and $\sum_{x\in L}m(x)=1$, the function $f^m:L\rightarrow [0,1]$
  defined by $f^m(x):=\sum_{y\leq x}m(y)$ is totally monotone and satisfies
  $f^m(\top)=1$ and $f^m(\bot)=0$. 
\end{lemma}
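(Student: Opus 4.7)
The plan is to reduce total monotonicity of $f^m$ to the classical inclusion–exclusion identity for the nonnegative set function $y \mapsto m(y)$ interpreted on the principal downsets of $L$. The boundary conditions $f^m(\bot)=0$ and $f^m(\top)=1$ are immediate from the assumptions $m(\bot)=0$ and $\sum_{x\in L}m(x)=1$, so the only real content is the inequality (\ref{eq:kmon}) for every $k\geq 2$.

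First, I would introduce the principal downset $\downarrow\! x := \{y\in L \mid y\leq x\}$ and observe two elementary facts about it. For any $x,y\in L$, one has the equality
\[
\downarrow\!(x\wedge y) = \downarrow\! x \cap \downarrow\! y,
\]
since $z\leq x\wedge y$ iff $z\leq x$ and $z\leq y$, and by induction this extends to arbitrary finite meets. On the other hand, only the inclusion
\[
\downarrow\!\Bigl(\bigvee_{i\in K} x_i\Bigr) \supseteq \bigcup_{i\in K} \downarrow\! x_i
\]
holds in general (equality requires distributivity, which we do not assume). This asymmetry between $\wedge$ and $\vee$ is precisely where the inequality, as opposed to equality, will come from.

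Next, I would apply the standard inclusion–exclusion formula to the nonnegative function $m$ over the union $\bigcup_{i\in K}\downarrow\! x_i$. Since $m$ assigns a number to each element of the finite set $L$, we have
\[
\sum_{y\in \bigcup_{i\in K}\downarrow x_i} m(y) \;=\; \sum_{\emptyset\neq I\subseteq K} (-1)^{|I|+1} \sum_{y\in \bigcap_{i\in I}\downarrow x_i} m(y).
\]
Using the meet identity above, $\bigcap_{i\in I}\downarrow\! x_i = \downarrow\!\bigl(\bigwedge_{i\in I}x_i\bigr)$, so the inner sum on the right is exactly $f^m\bigl(\bigwedge_{i\in I}x_i\bigr)$. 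Finally, since $m\geq 0$ and $\bigcup_{i\in K}\downarrow\! x_i \subseteq \downarrow\!\bigl(\bigvee_{i\in K}x_i\bigr)$, the left-hand side is bounded above by $f^m\bigl(\bigvee_{i\in K}x_i\bigr)$, yielding (\ref{eq:kmon}).

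There is no real obstacle here; the only thing to be careful about is that one must not assume equality in $\downarrow\!(\bigvee_i x_i) \supseteq \bigcup_i \downarrow\! x_i$, and one must explicitly invoke the nonnegativity of $m$ to pass from the inclusion–exclusion equality on $\bigcup_i\downarrow\! x_i$ to the desired inequality for $f^m(\bigvee_i x_i)$. The argument works uniformly for all $k\geq 2$, giving total monotonicity in one stroke rather than requiring an induction on $k$.
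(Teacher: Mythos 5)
Your proof is correct. Note, however, that the paper itself gives no proof of this lemma: it is stated as a known result imported from Barth\'elemy \cite{bar00}, so there is nothing in the paper to compare your argument against line by line. Your route --- the indicator/inclusion--exclusion identity $\sum_{y\in\bigcup_i\downarrow x_i}m(y)=\sum_{\emptyset\neq I\subseteq K}(-1)^{|I|+1}\sum_{y\in\bigcap_{i\in I}\downarrow x_i}m(y)$, combined with the exact identity $\downarrow\!\bigl(\bigwedge_{i\in I}x_i\bigr)=\bigcap_{i\in I}\downarrow\! x_i$ and the mere inclusion $\bigcup_{i}\downarrow\! x_i\subseteq\downarrow\!\bigl(\bigvee_i x_i\bigr)$ together with $m\geq 0$ --- is the standard argument (essentially Shafer's proof on $2^\Omega$, where the same asymmetry between intersection and union of downsets drives the inequality), it is sound at every step, and it correctly delivers all $k\geq 2$ at once.
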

\begin{proposition}
Any belief function is totally monotone.
\end{proposition}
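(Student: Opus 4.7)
The plan is to reduce the statement directly to Lemma \ref{lem:2}. The definition of a belief function gives us exactly a non-negative Möbius transform $m$ together with the boundary conditions $\bel(\top)=1$ and $\bel(\bot)=0$, and Lemma \ref{lem:2} is tailored to conclude total monotonicity from such data. So the whole task is simply to verify that the Möbius transform of $\bel$ satisfies the hypotheses of Lemma \ref{lem:2}.

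First I would let $\bel$ be a belief function and denote by $m$ its Möbius transform, so that by Equation (\ref{eq:bell}) we have $\bel(x)=\sum_{y\leq x}m(y)$ for all $x\in L$. The condition $\bel(\bot)=0$ reads $m(\bot)=0$ because the only element $y\leq \bot$ is $\bot$ itself. The condition $\bel(\top)=1$ reads $\sum_{x\in L}m(x)=1$. Combined with the non-negativity of $m$ required in the definition of a belief function, this also forces $m(x)\in[0,1]$ for every $x\in L$, so $m$ is a function $L\rightarrow[0,1]$ of the kind considered in Lemma \ref{lem:2}.

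Then I would invoke Lemma \ref{lem:2} with this $m$: the function $f^m$ it produces is, by construction, the same as $\bel$ via (\ref{eq:bell}), and Lemma \ref{lem:2} states precisely that $f^m$ is totally monotone. Hence $\bel$ is totally monotone, which is the claim. The main ``obstacle'' is really only to notice that no new computation is needed beyond bookkeeping the boundary conditions; the combinatorial content sits entirely in Lemma \ref{lem:2}, so the proof should fit in a few lines.
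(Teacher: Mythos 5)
Your proposal is correct and follows exactly the paper's own argument: identify the M\"obius transform $m$ of $\bel$, observe that the boundary conditions give $m(\bot)=0$ and $\sum_{x\in L}m(x)=1$, and apply Lemma \ref{lem:2}. The only difference is that you spell out the (trivial but worth noting) check that nonnegativity plus the sum condition puts $m$ into $[0,1]$, which the paper leaves implicit.
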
 
\begin{proof}
Let $\bel$ be a belief function, and $m$ its M\"obius transform. We know that
$m(\bot)=0$ and $\sum_{x\in L}m(x)=1$. Hence, by Lemma \ref{lem:2}, $\bel$ is
totally monotone.
\end{proof}

A totally monotone function does not have necessarily a non negative M\"obius
function. Simple examples show that monotonicity is a necessary condition. The
question to know whether monotonicity and total monotonicity imply non
negativity of the M\"obius function is still open. 

\subsection{Properties of belief functions}
A first result shown by Barth\'elemy shows that capacities collapse to belief
functions when $L$ is linear \cite{bar00}.
\begin{proposition}
Any capacity on $L$ is a belief function if and only if $L$ is a linear lattice.
\end{proposition}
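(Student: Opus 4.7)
The plan is to prove the two implications separately, leveraging the preceding proposition, which tells us that any belief function is totally monotone and hence in particular 2-monotone.

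For the ``$\Leftarrow$'' direction, I would assume $L$ is a chain $\bot = x_0 < x_1 < \cdots < x_n = \top$. On such a chain, an easy induction from the defining recursion (\ref{eq:mu}) gives the closed form $\mu(x_i,x_i) = 1$, $\mu(x_i,x_{i+1}) = -1$, and $\mu(x_i,x_j) = 0$ for $j > i+1$. Consequently, the Möbius transform of any capacity $v$ collapses to telescoping increments: $m(x_0) = 0$ and $m(x_k) = v(x_k) - v(x_{k-1})$ for $k \geq 1$, all non-negative by isotonicity of $v$. Hence $v$ is a belief function.

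For the ``$\Rightarrow$'' direction, I would argue the contrapositive: if $L$ is not a chain, exhibit a capacity that fails 2-monotonicity and therefore, by the preceding proposition, cannot be a belief function. Since $L$ is not linear, pick incomparable $x, y \in L$ and define
\[
v(t) := \begin{cases} 1 & \text{if } t \geq x \text{ or } t \geq y, \\ 0 & \text{otherwise.} \end{cases}
\]
Since $x, y$ are incomparable, neither equals $\bot$ (which is comparable to everything), so $v(\bot) = 0$; clearly $v(\top) = 1$; and $v$ is plainly isotone, making it a capacity. From incomparability, $x \wedge y < x$ strictly, so $x \wedge y \not\geq x$ and likewise $x \wedge y \not\geq y$, giving $v(x \wedge y) = 0$, while $v(x) = v(y) = v(x \vee y) = 1$. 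Then
\[
v(x) + v(y) - v(x \wedge y) = 2 > 1 = v(x \vee y),
\]
violating (\ref{eq:kmon}) for $k = 2$, so $v$ cannot be totally monotone.

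Both directions are short once the right objects are identified; the only genuinely substantive step is recognizing that the ``union indicator'' capacity above is automatically well-defined on all of $L$ and exhibits the 2-monotonicity defect on the diamond $\{x \wedge y, x, y, x \vee y\}$ without any further surgery elsewhere in the lattice.
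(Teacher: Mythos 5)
Your proof is correct. Note that the paper itself gives no proof of this proposition; it is simply attributed to Barth\'elemy, so there is nothing to compare against, and your argument supplies a complete, self-contained justification. Both halves check out: on a chain the M\"obius function is $\mu(x_i,x_i)=1$, $\mu(x_i,x_{i+1})=-1$ and $0$ beyond, so the M\"obius transform of any capacity telescopes to the non-negative increments $v(x_k)-v(x_{k-1})$ (with $m(\bot)=v(\bot)=0$ and total mass $v(\top)=1$); and for the converse, the indicator of the upset generated by two incomparable elements $x,y$ is indeed a capacity with $v(x)=v(y)=v(x\vee y)=1$, $v(x\wedge y)=0$, which violates $2$-monotonicity and hence, by the proposition ``any belief function is totally monotone,'' cannot be a belief function. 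The only stylistic remark is that the second half routes through total monotonicity; one could instead verify directly that the M\"obius transform of this $v$ takes a negative value (e.g.\ on $x\vee y$ when $x\vee y$ covers both $x$ and $y$), but in general that computation is messier, and your detour through $2$-monotonicity is the cleaner and more robust choice.
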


In the sequel, we address the combination of belief functions and their
decomposition in terms of simple support functions. We will see that classical
results generalize. 
\begin{definition}\label{def:drc}
  Let $\bel_1,\bel_2$ be two belief functions on $L$, with M\"obius transforms
  $m_1,m_2$. The \emph{Dempster's rule of combination} of $\bel_1,\bel_2$ is
  defined through its M\"obius transform $m$ by:
\[
m(x)=:(m_1\oplus m_2)(x):=\sum_{y_1\wedge y_2=x}m_1(y_1)m_2(y_2), \quad\forall
x\in L.
\]
\end{definition}
Since $m$ defines unambiguously the belief function, we may write as well
$\bel=\bel_1\oplus\bel_2$ to denote the combination.

\begin{proposition}\label{prop:comb}
Let $\bel_1,\bel_2$ be two belief functions on $L$, with co-M\"obius transforms
$q_1,q_2$, and consider their Dempster combination. Then, if $q$ denotes the
co-M\"obius transform of $\bel:= \bel_1\oplus\bel_2$,
\[
q(x)= q_1(x)q_2(x),\quad \forall x\in L.
\] 
\end{proposition}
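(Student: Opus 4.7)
The plan is to compute $q(x)$ directly from its definition $q(x) = \sum_{y \geq x} m(y)$ and expand $m$ using Definition \ref{def:drc}, then rearrange the double sum so that it factors into a product of the two sums defining $q_1(x)$ and $q_2(x)$.

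More concretely, I would start by writing
\[
q(x) = \sum_{y \geq x} m(y) = \sum_{y \geq x} \sum_{y_1 \wedge y_2 = y} m_1(y_1) m_2(y_2),
\]
and then reindex the double sum by the pair $(y_1, y_2)$ rather than by $y$. Since every pair $(y_1, y_2)$ contributes to exactly one value $y = y_1 \wedge y_2$, the double sum becomes
\[
q(x) = \sum_{\substack{y_1, y_2 \in L \\ y_1 \wedge y_2 \geq x}} m_1(y_1) m_2(y_2).
\]

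The key step, and the only substantive one, is the observation that $y_1 \wedge y_2 \geq x$ if and only if $y_1 \geq x$ and $y_2 \geq x$. This is immediate from the definition of meet as greatest lower bound: $x$ is below $y_1 \wedge y_2$ precisely when $x$ is a common lower bound of $y_1$ and $y_2$. With this equivalence, the summation condition splits, and the sum factors:
\[
q(x) = \Bigl(\sum_{y_1 \geq x} m_1(y_1)\Bigr) \Bigl(\sum_{y_2 \geq x} m_2(y_2)\Bigr) = q_1(x) q_2(x).
\]

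I do not anticipate any real obstacle: the argument requires no assumption on $L$ beyond its being a lattice (so that meets exist) and no properties of $\bel_1, \bel_2$ beyond having well-defined Möbius and co-Möbius transforms. The classical Boolean-lattice proof of (\ref{eq:drq}) goes through verbatim, with $\cap$ replaced by $\wedge$ and $\supseteq$ replaced by $\geq$; the whole point is that the factorization uses only the defining universal property of the meet.
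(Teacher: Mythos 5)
Your proof is correct and follows exactly the same route as the paper's: expand $q(x)$ via the M\"obius transform of the combination, reindex the double sum over pairs $(y_1,y_2)$ with $y_1\wedge y_2\geq x$, and use the equivalence $y_1\wedge y_2\geq x \Leftrightarrow (y_1\geq x \text{ and } y_2\geq x)$ to factor the sum. No differences worth noting.
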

\begin{proof}
We have:
\[
q(x) = \sum_{y\geq x}\sum_{y_1\wedge y_2=y}m_1(y_1)m_2(y_2) = \sum_{y_1\wedge
  y_2\geq x} m_1(y_1)m_2(y_2).
\]
One can decompose the above sum since if $y_1\geq x$ and $y_2\geq x$, then
$y_1\wedge y_2\geq x$ and reciprocally. Thus,
\[
q(x) = \sum_{y_1\geq x}m(y_1)\sum_{y_2\geq x}m(y_2) = q_1(x)q_2(x).
\]
\end{proof}
The above proposition generalizes (\ref{eq:drq}), and gives a simple means to
compute the Dempster combination.

\begin{remark}
  In Def. \ref{def:drc}, one may put as in the classical case $m(\bot)=0$. This
  does not affect the validity of Prop. \ref{prop:comb}, except for $x=\bot$.
  Indeed, by Prop. \ref{prop:comb}, one obtains $q(\bot)=1$, but
  $q(\bot)=\sum_{x\in L}m(x)<1$ in general if one puts $m(\bot)=0$ in Def.
  \ref{def:drc}.
\end{remark}

\begin{definition}
Let $y\in L$. A \emph{simple support function focused on $y$}, denoted by
$y^w$, is a function on $L$ such that its M\"obius transform satisfies:
\[
m(x) = \begin{cases}
  1-w, & \text{if } x=y\\
  w, & \text{if } x=\top\\
  0, & \text{otherwise. }
  \end{cases}
\] 
\end{definition}

The decomposition of some belief function $\bel$ in terms of simple support
functions is thus to write $\bel$ under the form:
\begin{equation}\label{eq:beldec}
\bel(x) = \bigoplus_{y\in L} y^{w_y}(x).
\end{equation}
The following result generalizes the decomposition in the classical case (see
Sec. \ref{sec:beli}).
\begin{theorem}
Let $\bel$ be a belief function such that its M\"obius transform $m$ satisfies
$m(\top)\neq 0$. 
The coefficients $w_y$ of the decomposition (\ref{eq:beldec}) write
\[
w_y=\prod_{x\geq y}q(x)^{-\mu(x,y)}
\]
where $\mu(x,y)$ is the M\"obius function of $L$.
\end{theorem}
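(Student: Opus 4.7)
The plan is to transport everything to the co-M\"obius side, where by Proposition~\ref{prop:comb} the Dempster combination in (\ref{eq:beldec}) becomes a pointwise product, and then to invert the resulting relation by M\"obius inversion. First I would compute the co-M\"obius transform $q_{y^w}$ of a single simple support function $y^w$: its M\"obius transform is supported only on $y$ and $\top$ (with values $1-w$ and $w$), so summing over $z\geq x$ yields
\[
q_{y^w}(x) = \begin{cases} 1, & x\leq y, \\ w, & x\not\leq y, \end{cases}
\]
because when $x\leq y$ both $y$ and $\top$ contribute (giving $1-w+w=1$), while when $x\not\leq y$ only $\top$ does.

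Iterating Proposition~\ref{prop:comb} across (\ref{eq:beldec}) I then obtain, for every $x\in L$,
\[
q(x) = \prod_{y\in L} q_{y^{w_y}}(x) = \prod_{\substack{y\in L\\ y\not\geq x}} w_y,
\]
since the factors indexed by $y\geq x$ are all equal to $1$. Under the assumption $m(\top)\neq 0$ one has $q(x)\geq m(\top)>0$ for every $x$, so all these quantities are strictly positive and I may safely pass to logarithms. Setting $S:=\sum_{y\in L}\log w_y$, the identity rewrites as $\log q(x)=S-\sum_{y\geq x}\log w_y$, and evaluating at $x=\top$ identifies $S=\log q(\top)$, hence
\[
\sum_{y\geq x}\log w_y \;=\; \log q(\top) - \log q(x).
\]

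The last step is to invert this dual zeta-type relation. Reading it in the dual lattice $L^\partial$, whose M\"obius function is $\mu^\partial(x,y)=\mu(y,x)$, M\"obius inversion produces
\[
\log w_y \;=\; \sum_{x\geq y}\mu(y,x)\bigl(\log q(\top) - \log q(x)\bigr).
\]
The orthogonality identity $\sum_{x\geq y}\mu(y,x)=\delta_{y,\top}$ kills the constant term whenever $y\neq\top$, leaving $\log w_y = -\sum_{x\geq y}\mu(y,x)\log q(x)$, which exponentiates to the announced product formula; for $y=\top$ the same computation gives $w_\top=1$, the neutral element of $\oplus$.

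The only genuine obstacle is the order reversal when inverting: one must work in $L^\partial$ rather than $L$ and track that its M\"obius function has the two arguments swapped (so the exponent $-\mu(y,x)$ in the statement must be read with $y\leq x$, consistently with (\ref{eq:mu})). The hypothesis $m(\top)\neq 0$ serves solely to guarantee that every $q(x)$ is strictly positive so that the logarithmic manipulations, and hence the final exponentiation, are well-defined throughout.
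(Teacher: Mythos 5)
Your proof is correct and follows essentially the same route as the paper's: compute the co-M\"obius transform of a simple support function, turn the Dempster combination into a pointwise product of commonalities via Proposition~\ref{prop:comb}, take logarithms, and invert in the dual order by M\"obius inversion. The additional remarks you make (strict positivity of every $q(x)$ from $m(\top)>0$, the argument order in $\mu$ under the order reversal, and the treatment of $y=\top$) are correct clarifications of points the paper leaves implicit.
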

\begin{proof}
We try to find $w_y$ such that
\[
\bel(x) = \bigoplus_{y\in L}y^{w_y}.
\]
This expression can be written in terms of the co-M\"obius transform:
\begin{equation}\label{eq:q}
q(x)= \prod_{y\in L}q_y(x),\quad x\in L,
\end{equation}
where $q_y$ is the co-M\"obius transform of $y^{w_y}$:
\[
q_y(x)=\begin{cases}
  1, & \text{if } x\leq y\\
  w_y, & \text{otherwise.}
  \end{cases}
\]
From (\ref{eq:q}), we obtain:
\begin{align*}
\log q(x) & = \sum_{y\in L}\log q_y(x) = \sum_{y\not\geq x}\log w_y \\
 & = \sum_{y\in L}\log w_y - \sum_{y\geq x}\log w_y.
\end{align*}
On the other hand,
\[
q(\top) = \prod_{y\in L}q_y(\top) = \prod_{y\in L}w_y.
\]
We supposed that $q(\top)=m(\top)\neq 0$, hence:
\[
\log q(x) = \log q(\top) - \sum_{y\geq x}\log w_y.
\]
We set $Q(x):=\log q(x)$ and $W(y):=\log w_y$. The last equality becomes:
\[
Q(x) = Q(\top)-\sum_{y\geq x}W(y).
\]
If we define $Q'(x) = Q(\top) - Q(x)$, we finally obtain:
\[
Q'(x) = \sum_{y\geq x} W(y).
\]
We recognize here the equation defining the M\"obius transform of $Q'$, up to an
inversion of the order (dual order)(see (\ref{eq:mob})). Hence, using
(\ref{eq:invm}):
\[
W(y)=\sum_{x\geq y}\mu(x,y)Q'(x)
\]
with $\mu$ defined by (\ref{eq:mu}). Rewriting this with original notation, we
obtain:
\[
\log w_y=\sum_{x\geq y}\mu(x,y)[\log q(\top) - \log q(x)].
\] 
Remarking that $\sum_{x\geq y}\mu(x,y)\log q(\top)$ is zero, since it
corresponds to the M\"obius transform of a constant function, we finally get:
\[
w_y=\prod_{x\geq y} q(x)^{-\mu(x,y)}.
\]
\end{proof}
Note that the above proof is much shorter and general than the original one by
Shafer \cite{sha76}.

As in the classical case, these coefficients may be strictly greater
than 1, hence corresponding simple support functions have negative M\"obius
transform and are no more belief functions.

\section{Necessity functions}
\label{sec:nece}
\begin{definition}
  A function $\N:L\rightarrow[0,1]$ is called a \emph{necessity function} if it
  satisfies $\N(x\wedge y)=\min(\N(x),\N(y))$, for all $x,y\in L$, and
  $\N(\bot)=0$, $\N(\top)=1$.
\end{definition}
The following result is due to Barth\'elemy \cite{bar00}.
\begin{proposition}
$\N$ is a necessity function if and only if it is belief function whose M\"obius
transform $m$ is such that its focal elements form a chain in $L$. 
\end{proposition}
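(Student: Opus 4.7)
The plan is to prove the two directions separately, with most of the work going into the necessity $\Rightarrow$ belief direction.

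For the easy direction, assume $\N$ is a belief function whose focal elements form a chain $A_1<A_2<\cdots<A_k$ in $L$. The key observation is that, because the $A_i$ are totally ordered, the set $\{i : A_i\le x\}$ is an initial segment $\{1,\dots,j(x)\}$ of the chain, where $j(x):=\max\{i : A_i\le x\}$ (with $j(x)=0$ if none). Hence $\N(x)=\sum_{i=1}^{j(x)}m(A_i)$, and partial sums are monotone in the index. Since $A_i\le x\wedge y$ iff $A_i\le x$ and $A_i\le y$, we get $j(x\wedge y)=\min(j(x),j(y))$, and the min-property follows immediately. The boundary conditions $\N(\bot)=0$, $\N(\top)=1$ are inherited from the belief function.

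For the converse, assume $\N(x\wedge y)=\min(\N(x),\N(y))$ together with the normalization. First I would derive isotonicity: if $x\le y$ then $x\wedge y=x$, so $\N(x)=\min(\N(x),\N(y))$, giving $\N(x)\le\N(y)$. Next, let $0=\alpha_0<\alpha_1<\cdots<\alpha_k=1$ be the distinct values taken by $\N$, and consider the level sets $L_i:=\{x\in L : \N(x)\ge\alpha_i\}$. By isotonicity each $L_i$ is an up-set, and by the min-property it is closed under $\wedge$. In a finite lattice this forces $L_i$ to be a principal filter: it admits a minimum element $A_i=\bigwedge L_i$, and $L_i=\{x : x\ge A_i\}$. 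The chain of inclusions $L=L_0\supsetneq L_1\supsetneq\cdots\supsetneq L_k$ then yields $\bot=A_0\le A_1\le\cdots\le A_k$, a chain in $L$.

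Define $m(A_i):=\alpha_i-\alpha_{i-1}>0$ for $i=1,\dots,k$ and $m(x):=0$ otherwise; note $m(\bot)=0$ because $A_0=\bot$ has not been assigned any mass (and the jumps start at $\alpha_1$), and $\sum_{x\in L}m(x)=\alpha_k-\alpha_0=1$. To verify this $m$ is indeed the M\"obius transform of $\N$, it suffices to show $\N(x)=\sum_{y\le x}m(y)$ for every $x$, since the M\"obius inversion formula (\ref{eq:invm})--(\ref{eq:mu}) asserts uniqueness. For fixed $x$, setting $j(x):=\max\{i : A_i\le x\}=\max\{i : x\in L_i\}$, we have $\N(x)=\alpha_{j(x)}$, and the telescoping sum
\[
\sum_{y\le x}m(y)=\sum_{i=1}^{j(x)}(\alpha_i-\alpha_{i-1})=\alpha_{j(x)}-\alpha_0=\N(x)
\]
closes the argument. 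Then $m\ge 0$ exhibits $\N$ as a belief function, with focal elements $A_1,\dots,A_k$ forming a chain.

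The step I expect to be the main obstacle is the structural one: recognizing that the min-property plus isotonicity forces every level set to be a principal filter. Once the chain of minima $A_0\le\cdots\le A_k$ is in hand, the definition of $m$ by successive jumps and the telescoping verification are routine; without that structural lemma, however, there is no obvious candidate mass function to even write down.
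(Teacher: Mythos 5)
Your proof is correct. Note that the paper itself gives no proof of this proposition --- it is stated as a result due to Barth\'elemy \cite{bar00} and cited without argument --- so there is nothing in the text to compare your route against. Your argument is self-contained and sound in both directions: the forward direction correctly exploits the fact that $\{i : A_i \le x\}$ is an initial segment of the chain and that $A_i \le x\wedge y$ iff $A_i\le x$ and $A_i\le y$; and the converse correctly identifies the key structural fact, namely that each level set $L_i=\{x : \N(x)\ge \alpha_i\}$ is an up-set closed under $\wedge$, hence (by finiteness) a principal filter $\uparrow A_i$, which produces the chain $\bot = A_0 < A_1 < \cdots < A_k$ (the strictness following from $L_{i-1}\supsetneq L_i$). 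The telescoping verification that the candidate mass $m(A_i)=\alpha_i-\alpha_{i-1}$ reproduces $\N$, combined with uniqueness of the M\"obius transform, legitimately concludes that $m$ \emph{is} the M\"obius transform of $\N$ and is nonnegative with $m(\bot)=0$. One small point worth making explicit if you write this up: $A_1>\bot$ because $\N(\bot)=0<\alpha_1$ excludes $\bot$ from $L_1$, which is what guarantees $m(\bot)=0$.
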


We define possibility functions as $\vee$-conjugates of necessity functions.
\begin{definition}
  Let $L$ be an autodual lattice, and $n$ a $\vee$-negation on $L$. For any
  necessity function $\N$ on $L$, its $\vee$-conjugate is called a
  \emph{possibility function}.
\end{definition}
Let $\Pi$ be a possibility function. Then ${}^\wedge\overline{\Pi}$ is its
corresponding necessity function by Lemma \ref{lem:neg} (ii).
\begin{proposition}
Let $L$ be an autodual lattice, and $n$ a $\vee$-negation on $L$. The mapping
$\Pi:L\rightarrow [0,1]$ is a possibility function if and only if
\begin{equation}\label{eq:pi}
\Pi(x\vee y) = \max(\Pi(x),\Pi(y)), \quad \forall x,y\in L.
\end{equation}
\end{proposition}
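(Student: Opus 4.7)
The plan is to exploit the duality between $\N$ and $\Pi$ provided by Lemma \ref{lem:neg}(ii): $\Pi$ is a possibility function precisely when $\N := {}^\wedge\overline{\Pi}$ is a necessity function, because then $\Pi = {}^\vee\overline{{}^\wedge\overline{\Pi}} = {}^\vee\overline{\N}$. So the whole proposition reduces to showing that the min-property of $\N$ is equivalent, through $\wedge$-conjugation, to the max-property of $\Pi$, and this is a direct translation using the De Morgan laws of the negation.

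For the forward direction, assume $\Pi = {}^\vee\overline{\N}$, i.e.\ $\Pi(x) = 1 - \N(n(x))$. Then for any $x,y \in L$,
\[
\Pi(x \vee y) = 1 - \N(n(x \vee y)) = 1 - \N(n(x) \wedge n(y)) = 1 - \min(\N(n(x)), \N(n(y))) = \max(\Pi(x), \Pi(y)),
\]
where the second equality uses that $n$ is a $\vee$-negation and the third uses that $\N$ is a necessity function. The boundary values $\Pi(\bot) = 0$ and $\Pi(\top) = 1$ follow from Lemma \ref{lem:neg}(i) together with Lemma \ref{lem:nega}(i).

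For the converse, suppose $\Pi$ satisfies (\ref{eq:pi}) (with $\Pi(\bot) = 0$, $\Pi(\top) = 1$, which are part of being a capacity). Define $\N := {}^\wedge\overline{\Pi}$, so $\N(x) := 1 - \Pi(n^{-1}(x))$. By Lemma \ref{lem:nega}(ii), $n^{-1}$ is a $\wedge$-negation, i.e.\ $n^{-1}(x \wedge y) = n^{-1}(x) \vee n^{-1}(y)$. Hence
\[
\N(x \wedge y) = 1 - \Pi(n^{-1}(x) \vee n^{-1}(y)) = 1 - \max(\Pi(n^{-1}(x)), \Pi(n^{-1}(y))) = \min(\N(x), \N(y)),
\]
and the boundary conditions $\N(\bot) = 0$, $\N(\top) = 1$ follow from those of $\Pi$ together with $n^{-1}(\bot) = \top$ and $n^{-1}(\top) = \bot$. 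Thus $\N$ is a necessity function, and by Lemma \ref{lem:neg}(ii), $\Pi = {}^\vee\overline{\N}$ is a possibility function.

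The main step is really just correctly invoking Lemma \ref{lem:nega}(ii) in the converse direction, since $n$ need not be involutive and one must be careful to conjugate with $n^{-1}$ (not $n$) to recover $\N$ from $\Pi$. Once this point is observed, everything else is a mechanical translation of min into max via $a \mapsto 1-a$ together with the De Morgan identity for $n^{-1}$.
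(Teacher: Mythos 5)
Your proof is correct and follows essentially the same route as the paper: the forward direction by pushing $\vee$ through $n$ and turning $\min$ into $\max$ via $a\mapsto 1-a$, and the converse by showing that the $\wedge$-conjugate ${}^\wedge\overline{\Pi}$ is a necessity function and invoking ${}^\vee\overline{{}^\wedge\overline{\Pi}}=\Pi$ from Lemma \ref{lem:neg}(ii). Your explicit remark that one must conjugate with $n^{-1}$ rather than $n$ in the converse (since $n$ need not be involutive) is exactly the point the paper's proof relies on, so there is nothing to add.
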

\begin{proof}
  Let $\Pi$ be a possibility function being the $\vee$-conjugate of some
  necessity function $N$. Then:
\begin{align*}
\Pi(x\vee y) & = 1-\N(n(x\vee y)) = 1- \N(n(x)\wedge n(y))\\
             & = 1- \min(\N(n(x)),\N(n(y))) =
             \max(1-\N(n(x)),1-\N(n(y)))\\ 
             & = \max(\Pi(x),\Pi(y)).
\end{align*} 
Conversely, let $\Pi$ satisfy (\ref{eq:pi}) and consider its $\wedge$-conjugate
${}^\wedge\overline{\Pi}$. We have:
\begin{align*}
{}^\wedge\overline{\Pi}(x\wedge y) & = 1-\Pi(n^{-1}(x\wedge y))=
                          1-\Pi(n^{-1}(x)\vee n^{-1}(y))\\ 
        & = 1-\max(\Pi(n^{-1}(x)),\Pi(n^{-1}(y)))\\ 
        & = \min({}^\wedge\overline{\Pi}(x),{}^\wedge\overline{\Pi}(y)).
\end{align*}
Hence ${}^\wedge\overline{\Pi}$ is a necessity function, which implies that
$\Pi$ is a possibility function since
${}^\vee\overline{{}^\wedge\overline{\Pi}}=\Pi$ by Lemma \ref{lem:neg} (ii).
\end{proof}

The next topic we address concerns distributions. Since we need the property of
decomposition of elements into supremum of join-irreducible elements, we impose
that $L$ is lower locally distributive. Since $L$ has to be autodual, then it is
also upper locally distributive, and so it is distributive. We propose the
following definition.
\begin{definition}
  Let $L$ be an autodual distributive lattice, some
  $\vee$-negation $n$ on $L$, and $\N$ a necessity function. The
  \emph{possibility distribution} $\pi:\mathcal{J}(L)\rightarrow [0,1]$
  associated to $\N$ is defined by $\pi(j):=\Pi(\{j\})$, $j\in \mathcal{J}(L)$,
  with $\Pi$ the possibility function which is $\vee$-conjugate of $\N$.

The \emph{necessity distribution} $\nu:\mathcal{M}(L)\rightarrow [0,1]$
associated to $\N$ is defined by $\nu(m):=\N(\{m\})$,
  $m\in\mathcal{M}(L)$.  
\end{definition}
Then, the value of $\Pi$ and $\N$ at any $x\in L$ can be computed as follows:
\[
\Pi(x) = \max(\pi(j) \mid j\in\eta^*(x)), \quad \N(x) = \min(\nu(m)\mid
m\in\mu^*(x)).  
\]
Remark that due to isotonicity of $\Pi$ and $\N$, and hence of $\pi$ and $\nu$,
one can replace as well $\eta^*,\mu^*$ by $\eta,\mu$. The above formulas are
well-defined since the decomposition is unique for distributive
lattices. Lastly, remark that necessarily there exists $j_0\in\mathcal{J}(L)$
such that $\pi(j_0)=1$, and $m_0\in\mathcal{M}(L)$ such that $\nu(m_0)=0$,
since $\Pi(\top)=1$ and $\N(\bot)=0$.

$\pi$ and $\nu$ are related through conjugation since $n$ maps join-irreducible
elements to meet-irreducible elements and vice-versa for $n^{-1}$ (see Lemma
\ref{lem:nega} (iii)). Hence, for $j\in\mathcal{J}(L)$ and
$m\in\mathcal{M}(L)$:
\begin{align*}
\pi(j) & =1 - \N(n(j)) = 1 - \nu(m_j)\\
\nu(m) & =1 - \Pi(n^{-1}(m)) = 1 - \pi(j_m),
\end{align*}
where $m_j:=n(j)$, and $j_m:=n^{-1}(m)$.

\medskip

Given a mass allocation defining some necessity function, it is easy to derive
the corresponding possibility distribution. The converse problem, i.e., given a
possibility distribution, find (if possible) the corresponding chain of focal
elements and mass allocation giving rise to this possibility distribution, is
less simple. Interestingly enough, this problem has always a unique solution,
which is very close to the classical case.
\begin{theorem}\label{th:2}
  Let $L$ be autodual, distributive, and $n$ be a $\vee$-negation on $L$. Let
  $\pi$ be a possibility distribution, and assume that the join-irreducible
  elements of $L$ are numbered such that $\pi(j_1)<\cdots <\pi(j_n)=1$.
  Then there is a unique maximal chain of focal elements generating $\pi$, given
  by the following procedure:
\begin{quote}
  Going from $j_n$ to $j_1$, at each step $k=n,n-1,\ldots,1$, select the unique
  join-irreducible element $\iota_k$ such that:
\begin{equation}\label{eq:proc}
\iota_k\not\in \eta(n(j_k)),\quad \iota_k\in\bigcap_{l=1}^{k-1}\eta(n(j_l)).
\end{equation}
Then the maximal chain is defined by $C_\pi:=\{\iota_n,
\iota_n\vee\iota_{n-1},\ldots, \iota_n\vee\cdots\vee\iota_2,\top\}$, and
\begin{equation}\label{eq:mobpi}
m(\iota_n\vee\iota_{n-1}\vee\cdots\vee\iota_k)=\pi(j_k)-\pi(j_{k-1}),\quad k=1,\ldots,n,
\end{equation}
with $\pi(j_0):=0$.
Moreover, at each step $k$,  it is equivalent to choose $\iota_k$ as the
smallest in $\eta(n(j_{k-1}))\setminus \eta(n(j_k))$.
\end{quote}
\end{theorem}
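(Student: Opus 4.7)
My strategy is a forward analysis: I assume a chain $F_n<F_{n-1}<\cdots<F_1=\top$ of focal elements with positive masses generates the necessity function $\N$ associated to $\pi$, and I deduce what the chain must be. Because $L$ is autodual and distributive, it is lower locally distributive, so every maximal chain from $\bot$ to $\top$ has length $n=|\mathcal{J}(L)|$ and consecutive elements differ by exactly one join-irreducible (Birkhoff). Thus I can write $F_k=\iota_n\vee\iota_{n-1}\vee\cdots\vee\iota_k$ with the $\iota_k$ pairwise distinct join-irreducibles exhausting $\mathcal{J}(L)$ when $k=1$.

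Next I translate $\Pi(j_i)=\pi(j_i)$ into constraints on the chain. Using $\N(x)=\sum_{F_k\leq x}m(F_k)$, the conjugation identity $\pi(j_i)=1-\N(n(j_i))$, and the observation that the nested $F_k$ make $\{k:F_k\leq n(j_i)\}$ an up-set in $\{1,\ldots,n\}$, the matching telescopes to $\pi(j_i)=\pi(j_{\kappa-1})$, where $\kappa$ is the smallest index with $F_\kappa\leq n(j_i)$, forcing $\kappa=i+1$. Equivalently, $F_{i+1}\leq n(j_i)$ and $F_i\not\leq n(j_i)$ for every $i$. Expressed in terms of the $\iota_k$ (using distributivity to see that $F_k\leq n(j_i)$ is equivalent to $\iota_l\leq n(j_i)$ for all $l\geq k$), these become precisely the conditions (\ref{eq:proc}); telescoping the partial sums then yields the mass formula (\ref{eq:mobpi}).

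The delicate step is to show that (\ref{eq:proc}) admits a unique solution $\iota_k$ at each step, and that this solution coincides with the smallest element of $\eta(n(j_{k-1}))\setminus\eta(n(j_k))$. The key observation is that isotonicity of $\Pi$ on $L$ makes the labelling $\pi(j_1)<\cdots<\pi(j_n)=1$ a linear extension of the subposet $\mathcal{J}(L)$: in particular $j_k\not\leq j_l$ whenever $l<k$. Via Birkhoff's theorem $L\cong\mathcal{O}(\mathcal{J}(L))$ and the anti-automorphism of $\mathcal{J}(L)$ induced by the $\vee$-negation $n$ (Proposition 1), one exploits the standard fact that in a distributive lattice, for every meet-irreducible $m$, the set $\{j\in\mathcal{J}(L):j\not\leq m\}$ has a unique minimum, namely the join-irreducible $\iota$ such that $m\vee\iota$ is the unique cover of $m$. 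Applying this to $m=n(j_k)$ produces the candidate $\iota_k$, and the linear-extension property ensures $\iota_k\leq n(j_l)$ for all $l<k$, so $\iota_k$ satisfies (\ref{eq:proc}). Any rival candidate would lie strictly above $\iota_k$ in $L$, which via the anti-automorphism contradicts the linear extension; this gives uniqueness. Maximality of the resulting chain follows since $F_{k-1}=F_k\vee\iota_{k-1}$ is then a cover in $L$, and the main obstacle of the proof is precisely the uniqueness argument in this last paragraph.
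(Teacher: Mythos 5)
Your proof is correct in substance, but it reaches the crux of the theorem by a genuinely different route than the paper. The paper proceeds by \emph{construction and verification}: it shows existence and uniqueness of $\iota_n$ by a rank argument ($\bigvee_{k=1}^{n-1}j_k$ is a co-atom, hence $\bigwedge_{k=1}^{n-1}n(j_k)$ is an atom and $\bigcap_{k=1}^{n-1}\eta(n(j_k))$ is a singleton), handles the later steps by recursing on the sublattices $\mathcal{O}(\mathcal{J}(L)\setminus\{\iota_n\})$, $\mathcal{O}(\mathcal{J}(L)\setminus\{\iota_n,\iota_{n-1}\})$, \ldots, and only at the end solves the triangular system for $m$. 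You instead run a \emph{forward analysis} from an assumed chain of focal elements and pin down $\iota_k$ via the fact that, for a meet-irreducible $m$, the upset $\mathcal{J}(L)\setminus\eta(m)$ has a unique minimum. This buys you several things at once: uniqueness of the whole chain (not merely of each step of the procedure), the ``smallest element of $\eta(n(j_{k-1}))\setminus\eta(n(j_k))$'' characterization, and the covering property needed for maximality, all of which the paper treats by separate (and rather terse) arguments. The one step you must make explicit for the argument to close is the identification that actually powers your sentence ``the linear-extension property ensures $\iota_k\leq n(j_l)$ for all $l<k$'': by Birkhoff duality the anti-automorphism $n$ of $L$ induces an anti-automorphism $\sigma$ of $\mathcal{J}(L)$ with $\eta(n(x))=\mathcal{J}(L)\setminus\sigma(\eta(x))$, so the unique minimum of $\{j: j\not\leq n(j_k)\}=\sigma(\eta(j_k))$ is exactly $\sigma(j_k)$; then $\iota_k\leq n(j_l)\Leftrightarrow \sigma(j_k)\notin\sigma(\eta(j_l))\Leftrightarrow j_k\not\leq j_l$, which is precisely the linear-extension property, and a rival candidate $\sigma(j_m)>\sigma(j_k)$ forces $j_m<j_k$, hence $m<k$ and the failure of $\sigma(j_m)\leq n(j_m)$. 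With that computation written out, your uniqueness, ``smallest element,'' and maximality claims all follow in one or two lines each, and your telescoping argument for (\ref{eq:mobpi}) matches the paper's triangular-system step. You should also justify, rather than assume, that the focal chain has $n$ terms with $F_1=\top$ (your own forward analysis does this: $\kappa_i=i+1$ for $i=1,\ldots,n$ already forces $n$ distinct chain elements in a lattice of height $n$, hence consecutive covers), but this is a matter of ordering the exposition, not a missing idea.
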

\begin{proof}
For ease of notation, denote $n(j_k)$ by $m_k$ (meet-irreducible). 

We first show that such a procedure can always work and leads to a unique
solution for $C_\pi$. Assume that the poset $\mathcal{J}(L)$ has $q$ connected
components $J_1,\ldots,J_q$. By definition, $j_n$ is one of the maximal
elements of one of the connected components, say $J_{q_0}$. Clearly,
$\bigwedge_{k=1}^n m_k=n(\bigvee_{k=1}^n j_k)=n(\top)=\bot$. But
$\bigvee_{k=1}^{n-1} j_k\neq\top$, otherwise $\big(\bigcup_{l=1,\ldots,q,l\neq
q_0}J_l\big)\cup J'_{q_0}$, where $J'_{q_0}$ is a maximal downset of
$J_{q_0}\setminus \{j_n\}$, would be another downset corresponding to $\top$,
which is impossible since $L$ is distributive (Birkhoff's theorem). This
implies that there exists $\iota_n\in \bigcap_{k=1}^{n-1}\eta(m_k)$, and
$\iota_n\not\in \eta(m_n)$.  Let us show that $\iota_n$ is unique. Since $L$ is
distributive, it is ranked and any maximal chain has length
$|\mathcal{J}(L)|=n$. Hence, $\bigvee_{k=1}^{n-1}j_k$ has height $n-1$ (it is a
co-atom), and $\bigwedge_{k=1}^{n-1}m_k$ is an atom.  Therefore,
$\bigcap_{k=1}^{n-1}\eta(m_k)$ is a singleton.

For $\iota_{n-1}$ and subsequent ones, we apply the same reasoning on the
lattice $\mathcal{O}(\mathcal{J}(L)\setminus\{\iota_n\})$, then on
$\mathcal{O}(\mathcal{J}(L)\setminus\{\iota_n,\iota_{n-1}\})$, etc., instead of
$L=\mathcal{O}(\mathcal{J}(L))$. Hence, there will be $n$ steps, and at each
step one join-irreducible element is chosen in a unique way.

We prove now that the sequence $\{\iota_n, \iota_n\vee\iota_{n-1},\ldots,
\iota_n\vee\cdots\vee\iota_2,\top\}$ is a maximal chain, denoted $C_\pi$. It
suffices to prove that $\iota_n\vee\iota_{n-1}\vee\cdots\vee\iota_k\succ
\iota_n\vee\iota_{n-1}\vee\cdots\vee\iota_{k+1}$, $k=1,\ldots,n-1$. The fact
that the former is greater or equal to the latter is obvious, hence $C_\pi$ is
a chain. To prove that it is maximal, we have to show that equality cannot
occur among any two subsequent elements. To see this, observe that at each step
$k$:
\begin{equation}\label{eq:proof}
\iota_k\not\leq m_k,\quad \iota_k\leq m_{k-1},\iota_k\leq
m_{k-2},\ldots,\iota_k\leq m_1.
\end{equation}
Hence $\iota_{k-1}\not\leq \iota_k$, otherwise $\iota_{k-1}\leq m_{k-1}$ would
hold, a contradiction. Hence, the sequence $\iota_n,\iota_{n-1},\ldots,\iota_1$
is non decreasing, and equality cannot occur. 

Let us prove that it suffices to choose $\iota_k$ as the smallest in
$\eta(n(j_{k-1})\setminus \eta(n(j_k))$. If at step $k$, a smallest $\iota_k$
is not chosen in $\eta(n(j_{k-1})\setminus \eta(n(j_k))$, it will be taken
after, and the sequence $\iota_n,\iota_{n-1},\ldots,\iota_1$ will be no more
non decreasing, a contradiction.

It remains to prove that $\pi$ is strictly increasing and to verify the
expression of $m$. Let us prove by induction that 
\begin{equation}\label{eq:pi1}
\pi(j_k)=1-m(\iota_n)-m(\iota_n\vee\iota_{n-1}) - \cdots-
m(\iota_n\vee\cdots\vee \iota_{k+1}),\quad k=n,\ldots,1.
\end{equation}
We show it for $k=n$. We have
\[
\pi(j_n)=1-\nu(m_n) = 1-\sum_{\substack{x\leq m_n\\x\in C_\pi}}m(x).
\]
Since $\iota_n\not\in\eta(m_n)$, no $x$ in $C_\pi$ can be smaller than $m_n$.
Hence $\pi(j_n)=1$. Let us assume (\ref{eq:pi1}) is true from $n$ up to some
$k$, and prove it is still true for $k-1$. Using (\ref{eq:proof}), we have:
\begin{align*}
\pi(j_{k-1}) &=1-\nu(m_{k-1}) = 1-\sum_{\substack{x\leq m_{k-1}\\x\in
C_\pi}}m(x) \\ 
& = 1 - \sum_{\substack{x\leq m_{k}\\x\in C_\pi}}m(x) -
m(\iota_n\vee\cdots\vee\iota_k) = \pi(j_k)-m(\iota_n\vee\cdots\vee\iota_k),
\end{align*}
which proves (\ref{eq:pi1}). Lastly, remark that the linear system of $n$
equations (\ref{eq:pi1}) is triangular, with no zero on the diagonal. Hence it
has a unique solution, which is easily seen to be (\ref{eq:mobpi}).
\end{proof}

As illustration of the theorem, we give an example.
\begin{example}
Let us consider the distributive autodual lattice given on
Fig. \ref{fig:lat3}. 
\begin{figure}[htb]
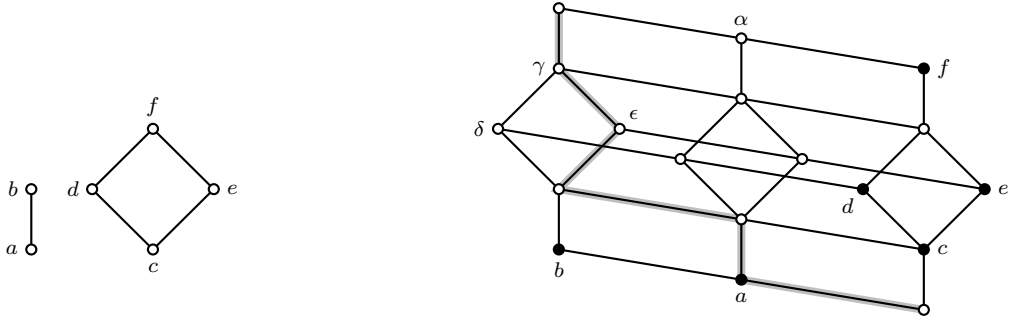

\begin{center}
\psset{unit=0.8cm}
\pspicture(0,0)(5,5)
\psline(0,1)(0,2)
\pspolygon(2,1)(1,2)(2,3)(3,2)
\pscircle[fillstyle=solid](0,1){0.1}
\pscircle[fillstyle=solid](0,2){0.1}
\pscircle[fillstyle=solid](2,1){0.1}
\pscircle[fillstyle=solid](1,2){0.1}
\pscircle[fillstyle=solid](2,3){0.1}
\pscircle[fillstyle=solid](3,2){0.1}
\uput[180](0,1){\scriptsize  $a$}
\uput[180](0,2){\scriptsize  $b$}
\uput[-90](2,1){\scriptsize  $c$}
\uput[180](1,2){\scriptsize  $d$}
\uput[0](3,2){\scriptsize  $e$}
\uput[90](2,3){\scriptsize  $f$}
\endpspicture
\hspace*{2cm}
\psset{unit=0.8cm}
\pspicture(0,0)(9,6)
\psline[linewidth=3pt,linecolor=lightgray](7,0)(4,0.5)(4,1.5)(1,2)(2,3)(1,4)(1,5)
\pspolygon(1,2)(0,3)(1,4)(2,3)
\pspolygon(4,1.5)(3,2.5)(4,3.5)(5,2.5)
\pspolygon(7,1)(6,2)(7,3)(8,2)
\psline(1,1)(1,2)
\psline(4,0.5)(4,1.5)
\psline(7,0)(7,1)
\psline(1,4)(1,5)
\psline(4,3.5)(4,4.5)
\psline(7,3)(7,4)
\psline(1,1)(7,0)
\psline(1,2)(7,1)
\psline(0,3)(6,2)
\psline(2,3)(8,2)
\psline(1,4)(7,3)
\psline(1,5)(7,4)
\pscircle[fillstyle=solid,fillcolor=black](1,1){0.1}
\pscircle[fillstyle=solid,fillcolor=black](4,0.5){0.1}
\pscircle[fillstyle=solid](7,0){0.1}
\pscircle[fillstyle=solid](1,2){0.1}
\pscircle[fillstyle=solid](4,1.5){0.1}
\pscircle[fillstyle=solid,fillcolor=black](7,1){0.1}
\pscircle[fillstyle=solid](0,3){0.1}
\pscircle[fillstyle=solid](3,2.5){0.1}
\pscircle[fillstyle=solid,fillcolor=black](6,2){0.1}
\pscircle[fillstyle=solid](2,3){0.1}
\pscircle[fillstyle=solid](5,2.5){0.1}
\pscircle[fillstyle=solid,fillcolor=black](8,2){0.1}
\pscircle[fillstyle=solid](1,4){0.1}
\pscircle[fillstyle=solid](4,3.5){0.1}
\pscircle[fillstyle=solid](7,3){0.1}
\pscircle[fillstyle=solid](1,5){0.1}
\pscircle[fillstyle=solid](4,4.5){0.1}
\pscircle[fillstyle=solid,fillcolor=black](7,4){0.1}
\uput[-90](4,0.5){\scriptsize  $a$}
\uput[-90](1,1){\scriptsize  $b$}
\uput[0](7,1){\scriptsize  $c$}
\uput[225](6,2){\scriptsize  $d$}
\uput[0](8,2){\scriptsize  $e$}
\uput[0](7,4){\scriptsize  $f$}
\uput[90](4,4.5){\scriptsize  $\alpha$}
\uput[180](1,4){\scriptsize  $\gamma$}
\uput[180](0,3){\scriptsize  $\delta$}
\uput[45](2,3){\scriptsize  $\epsilon$}
\endpspicture
\end{center}
\caption{Example of autodual distributive lattice $L$ (right), with
  $\mathcal{J}(L)$ (left)}
\label{fig:lat3}
\end{figure}
Join-irreducible elements are $a,b,c,d,e,f$, while
meet-irreducible ones are $\alpha,b,\gamma,\delta,\epsilon,f$. We propose as
$\vee$-negation the following:
\begin{center}
\begin{tabular}{|c|c||c|c|}\hline
$x$ & $n(x)$ & $x$ & $n(x)$ \\ \hline
$a$ & $\alpha$ & $d$ & $\epsilon$\\
$b$ & $f$ & $e$ & $\delta$\\
$c$ & $\gamma$ & $f$ & $b$\\ \hline
\end{tabular}
\end{center}
Let us consider a possibility distribution satisfying
\[
\pi(c)<\pi(d)<\pi(e)<\pi(a)<\pi(f)<\pi(b)=1.
\]
(observe that the sequence $c,d,e,a,f,b$ is non decreasing, as requested). We
apply the procedure of Th. \ref{th:2}. For $b$, we have $n(b)=f=c\vee d\vee
e\vee f$, and for $f$, we have $n(f)=b=a\vee b$. Hence the first
join-irreducible element of the sequence, $\iota_6$, is $a$ (not in $\eta(f)$,
and minimal in $\eta(b)$). Table \ref{tab:1} summarizes all the steps.
\begin{table}[htb]
\begin{center}
\begin{tabular}{|c|c|c|c|c|c|}\hline
step $k$ & $x$ & $n(x)$ & $\eta(n(x))$ & $\iota_k$ & chain\\ \hline
6 & $b$ & $f$ & $c,d,e,f$ & $a$  & $a$\\
5 & $f$ & $b$ & $a,b$ & $c$  & $a\vee c$\\
4 & $a$ & $\alpha$ & $a,c,d,e,f$ & $b$  & $a\vee c\vee b$\\
3 & $e$ & $\delta$ & $a,b,c,d$ & $e$  & $a\vee c\vee b\vee e$\\
2 & $d$ & $\epsilon$ & $a,b,c,e$ & $d$  & $a\vee c\vee b\vee e\vee d$\\
1 & $c$ & $\gamma$ & $a,b,c,d,e$ & $f$  & $\top$\\ \hline
\end{tabular}
\end{center}
\caption{Computation of $C_\pi$}
\label{tab:1}
\end{table}
The maximal chain is in gray on Fig. \ref{fig:lat3}. 
We deduce that:
\begin{align*}
  \pi(b) &=1\\
  \pi(f) &=1-m(a)\\
  \pi(a) &=1-m(a) -m(a\vee c)\\
  \pi(e) &=1-m(a) -m(a\vee c) - m(a\vee c\vee b)\\
  \pi(d) &=1-m(a) -m(a\vee c) - m(a\vee c\vee b) - m(a\vee c\vee b\vee e)\\
  \pi(c) &=1-m(a) -m(a\vee c) - m(a\vee c\vee b) - m(a\vee c\vee b\vee e) -
  m(a\vee c\vee b\vee e\vee d)
\end{align*}
from which we deduce
\begin{align*}
m(a) &= \pi(b)-\pi(f)\\
m(a\vee c ) &= \pi(f) -\pi(a)\\
m(a\vee c\vee b) &= \pi(a)-\pi(e)\\
m(a\vee c\vee b\vee e) &= \pi(e) -\pi(d)\\
m(a\vee c\vee b\vee e\vee d) &= \pi(d) -\pi(c)\\ 
\end{align*}
and $m(\top) = 1-m(a) -m(a\vee c) - m(a\vee c\vee b) - m(a\vee c\vee b\vee e) -
  m(a\vee c\vee b\vee e\vee d)= \pi(c)$. 
\end{example}

\section{Acknowledgment}
The author addresses all his thanks to Bruno Leclerc and Bernard Monjardet for
fruitful discussions on negations.  

\bibliographystyle{plain}
\bibliography{../BIB/fuzzy,../BIB/grabisch,../BIB/general}

\begin{thebibliography}{10}

\bibitem{bar00}
J.-P. Barth\'elemy.
\newblock Monotone functions on finite lattices: an ordinal approach to
  capacities, belief and necessity functions.
\newblock In J.~Fodor, B.~{De Baets}, and P.~Perny, editors, {\em Preferences
  and Decisions under Incomplete Knowledge}, pages 195--208. Physica Verlag,
  2000.

\bibitem{bir67}
G.~Birkhoff.
\newblock {\em Lattice Theory}.
\newblock American Mathematical Society, 3d edition, 1967.

\bibitem{dapr90}
B.~A. Davey and H.~A. Priestley.
\newblock {\em Introduction to Lattices and Orders}.
\newblock Cambridge University Press, 1990.

\bibitem{dem67}
A.~P. Dempster.
\newblock Upper and lower probabilities induced by a multivalued mapping.
\newblock {\em Ann. Math. Statist.}, 38:325--339, 1967.

\bibitem{den95}
T.~Den{\oe}ux.
\newblock A $k$-nearest neighbor classification rule based on {
  Dempster-Shafer} theory.
\newblock {\em IEEE Trans. on Systems, Man and Cybernetics}, 25(05):804--813,
  1995.

\bibitem{den00a}
T.~Den{\oe}ux.
\newblock A neural network classifier based on {Dempster-Shafer} theory.
\newblock {\em IEEE Trans. on Systems, Man and Cybernetics A}, 30(2):131--150,
  2000.

\bibitem{dupr85b}
D.~Dubois and H.~Prade.
\newblock {\em Possibility Theory}.
\newblock Plenum Press, 1988.

\bibitem{gawi99}
B.~Ganter and R.~Wille.
\newblock {\em Formal Concept Analysis: Mathematical Foundations}.
\newblock Springer Verlag, 1999.

\bibitem{gra98c}
M.~Grabisch.
\newblock The interaction and {M}\"obius representations of fuzzy measures on
  finite spaces, $k$-additive measures: a survey.
\newblock In M.~Grabisch, T.~Murofushi, and M.~Sugeno, editors, {\em Fuzzy
  Measures and Integrals --- Theory and Applications}, pages 70--93. Physica
  Verlag, 2000.

\bibitem{grla00a}
M.~Grabisch and Ch. Labreuche.
\newblock The symmetric and asymmetric {C}hoquet integrals on finite spaces for
  decision making.
\newblock {\em Statistical Papers}, 43:37--52, 2002.

\bibitem{grla03}
M.~Grabisch and Ch. Labreuche.
\newblock Bi-belief functions and bi-possibility measures.
\newblock In {\em Proc. of the Int. Fuzzy Systems Association World Congress
  (IFSA 2003)}, pages 155--158, Istanbul, Turkey, June 2003.

\bibitem{kra01}
I.~Kramosil.
\newblock Degrees of belief in partially ordered sets.
\newblock {\em Neural Network World}, 4:363--389, 2001.

\bibitem{kra01a}
I.~Kramosil.
\newblock {\em Probabilistic analysis of belief functions}.
\newblock Kluwer Academic/Plenum Publishers, New York, 2001.

\bibitem{mon03}
B.~Monjardet.
\newblock The presence of lattice theory in discrete problems of mathematical
  social sciences. why.
\newblock {\em Math. Soc. Sci.}, 46:103--144, 2003.

\bibitem{rot64}
{G. C.} Rota.
\newblock On the foundations of combinatorial theory {I}. {T}heory of
  {M}\"obius functions.
\newblock {\em Zeitschrift f\"ur Wahrscheinlichkeitstheorie und Verwandte
  Gebiete}, 2:340--368, 1964.

\bibitem{sha76}
G.~Shafer.
\newblock {\em A Mathematical Theory of Evidence}.
\newblock Princeton Univ. Press, 1976.

\bibitem{sme90}
Ph. Smets.
\newblock The combination of evidence in the transferable belief model.
\newblock {\em IEEE Tr. On Pattern Analysis and Machine Intelligence},
  12(5):447--458, 1990.

\bibitem{sme94}
Ph. Smets.
\newblock What is {D}empster-{S}hafer's model?
\newblock In R.~R. Yager, M.~Fedrizzi, and J.~Kacprzyk, editors, {\em Advances
  in the Dempster-Shafer Theory of Evidence}, pages 5--34. J.Wiley \& Sons,
  1994.

\bibitem{sme95}
Ph. Smets.
\newblock The canonical decomposition of a weighted belief.
\newblock In {\em Proc. of the 14th Int. Joint Conf. on Artificial Intelligence
  (IJCAI'95)}, pages 1896--1901, Montreal, August 1995.

\bibitem{thlu63}
R.~M. Thrall and W.~F. Lucas.
\newblock $n$-person games in partition function form.
\newblock {\em Naval Research Logistic Quarterly}, 10:281--298, 1963.

\end{thebibliography}

\end{document}